\documentclass[letterpaper, 10 pt, conference]{ieeeconf}

\IEEEoverridecommandlockouts
\usepackage{amsmath, amssymb, amsfonts, mathtools}

\usepackage{amsthm}
\usepackage{bm}
\usepackage{graphicx}
\usepackage{booktabs}
\usepackage{placeins}
\usepackage{hyperref}
\usepackage{cleveref}
\usepackage[dvipsnames]{xcolor}
\let\labelindent\relax
\usepackage{enumitem}
\hypersetup{
  colorlinks=true,
  linkcolor=Blue,
  citecolor=Blue,
  urlcolor=Blue
}
\usepackage[ruled,vlined,linesnumbered]{algorithm2e}
\usepackage{tikz}
\newcommand{\legendbox}[1]{%
\setlength{\fboxsep}{0pt}%
  \fcolorbox{black!40}{#1}{\phantom{\rule{0.45cm}{0.18cm}}}%
}
\definecolor{phaseblue}{HTML}{90CAF9}
\definecolor{phaseorange}{HTML}{FFCC80}
\definecolor{phasepurple}{HTML}{CE93D8}

\crefformat{figure}{Figure~#2#1#3}
\crefformat{table}{Table~#2#1#3}
\crefformat{section}{Section~#2#1#3}
\crefformat{subsection}{Section~#2#1#3}
\crefformat{subsubsection}{Section~#2#1#3}
\crefformat{theorem}{Theorem~#2#1#3}
\crefformat{lemma}{Lemma~#2#1#3}
\crefformat{corollary}{Corollary~#2#1#3}
\crefformat{proposition}{Proposition~#2#1#3}
\crefformat{definition}{Definition~#2#1#3}
\crefformat{assumption}{Assumption~#2#1#3}
\crefformat{remark}{Remark~#2#1#3}
\crefformat{example}{Example~#2#1#3}
\crefformat{algorithm}{Algorithm~#2#1#3}
\crefformat{equation}{(#2#1#3)}
\crefformat{appendix}{Appendix~#2#1#3}
\crefformat{problem}{Problem~#2#1#3}
\crefmultiformat{figure}{Figures~#2#1#3}{ and~#2#1#3}{, #2#1#3}{, and~#2#1#3}
\crefmultiformat{table}{Tables~#2#1#3}{ and~#2#1#3}{, #2#1#3}{, and~#2#1#3}
\crefmultiformat{section}{Sections~#2#1#3}{ and~#2#1#3}{, #2#1#3}{, and~#2#1#3}
\crefmultiformat{subsection}{Sections~#2#1#3}{ and~#2#1#3}{, #2#1#3}{, and~#2#1#3}
\crefmultiformat{subsubsection}{Sections~#2#1#3}{ and~#2#1#3}{, #2#1#3}{, and~#2#1#3}
\crefmultiformat{theorem}{Theorems~#2#1#3}{ and~#2#1#3}{, #2#1#3}{, and~#2#1#3}
\crefmultiformat{lemma}{Lemmas~#2#1#3}{ and~#2#1#3}{, #2#1#3}{, and~#2#1#3}
\crefmultiformat{corollary}{Corollaries~#2#1#3}{ and~#2#1#3}{, #2#1#3}{, and~#2#1#3}
\crefmultiformat{proposition}{Propositions~#2#1#3}{ and~#2#1#3}{, #2#1#3}{, and~#2#1#3}
\crefmultiformat{definition}{Definitions~#2#1#3}{ and~#2#1#3}{, #2#1#3}{, and~#2#1#3}
\crefmultiformat{assumption}{Assumptions~#2#1#3}{ and~#2#1#3}{, #2#1#3}{, and~#2#1#3}
\crefmultiformat{remark}{Remarks~#2#1#3}{ and~#2#1#3}{, #2#1#3}{, and~#2#1#3}
\crefmultiformat{example}{Examples~#2#1#3}{ and~#2#1#3}{, #2#1#3}{, and~#2#1#3}
\crefmultiformat{algorithm}{Algorithms~#2#1#3}{ and~#2#1#3}{, #2#1#3}{, and~#2#1#3}
\crefmultiformat{equation}{(#2#1#3)}{ and~(#2#1#3)}{, (#2#1#3)}{, and~(#2#1#3)}
\crefmultiformat{appendix}{Appendices~#2#1#3}{ and~#2#1#3}{, #2#1#3}{, and~#2#1#3}
\crefmultiformat{problem}{Problems~#2#1#3}{ and~#2#1#3}{, #2#1#3}{, and~#2#1#3}

\crefrangeformat{figure}{Figures~#3#1#4--#5#2#6}
\crefrangeformat{table}{Tables~#3#1#4--#5#2#6}
\crefrangeformat{section}{Sections~#3#1#4--#5#2#6}
\crefrangeformat{subsection}{Sections~#3#1#4--#5#2#6}
\crefrangeformat{subsubsection}{Sections~#3#1#4--#5#2#6}
\crefrangeformat{theorem}{Theorems~#3#1#4--#5#2#6}
\crefrangeformat{lemma}{Lemmas~#3#1#4--#5#2#6}
\crefrangeformat{corollary}{Corollaries~#3#1#4--#5#2#6}
\crefrangeformat{proposition}{Propositions~#3#1#4--#5#2#6}
\crefrangeformat{definition}{Definitions~#3#1#4--#5#2#6}
\crefrangeformat{assumption}{Assumptions~#3#1#4--#5#2#6}
\crefrangeformat{remark}{Remarks~#3#1#4--#5#2#6}
\crefrangeformat{example}{Examples~#3#1#4--#5#2#6}
\crefrangeformat{algorithm}{Algorithms~#3#1#4--#5#2#6}
\crefrangeformat{equation}{(#3#1#4)--(#5#2#6)}
\crefrangeformat{appendix}{Appendices~#3#1#4--#5#2#6}
\crefrangeformat{problem}{Problems~#3#1#4--#5#2#6}

\makeatletter
\def\endtable{\end@float}
\def\endfigure{\end@float}
\makeatother
\usepackage[caption=false]{subfig}
\usepackage{mathrsfs}
\usepackage[mode=buildnew]{standalone}
\newtheoremstyle{ieeeblock}
{6pt}{6pt}{\normalfont}{}{\bfseries}{.}{0.5em}{}
\theoremstyle{ieeeblock}
\newtheorem{theorem}{Theorem}

\newtheorem{assumption}{Assumption}
\newtheorem{definition}{Definition}
\newtheorem{remark}{Remark}
\newtheorem{problem}{Problem}
\newcommand{\E}{\mathbb{E}}
\newcommand{\Prob}{\mathrm{Pr}}
\newcommand{\CVaR}{\mathrm{CVaR}}
\newcommand{\VaR}{\mathrm{VaR}}
\newcommand{\R}{\mathbb{R}}
\newcommand{\ccmppi}{{\scshape{CcMppi}}}

\usepackage{parskip}

\title{\LARGE \bf Risk-Constrained Belief-Space Optimization\\for Safe Control under Latent Uncertainty}

\author{Clinton Enwerem\textsuperscript{1}, John S. Baras\textsuperscript{1}, Calin Belta\textsuperscript{1}%
\thanks{\textsuperscript{1}The authors are with the Institute for Systems Research, University of Maryland, College Park, MD, USA. Emails:~\texttt{\char`\{enwerem, baras, calin\char`\}@umd.edu}.}
\thanks{\copyright~2026 IEEE. Personal use of this material is permitted. Permission from IEEE must be obtained for all other uses, in any current or future media, including reprinting/republishing this material for advertising or promotional purposes, creating new collective works, for resale or redistribution to servers or lists, or reuse of any copyrighted component of this work in other works. Accepted for publication at the 65th IEEE Conference on Decision and Control (CDC).}
}

\begin{document}

\maketitle
\thispagestyle{empty}
\pagestyle{empty}

\begin{abstract}
Many safety-critical control systems operate under latent uncertainty that sensors cannot resolve at decision time. Such uncertainty, arising from unknown physical properties, disturbances, or unobserved geometry, affects dynamics, task feasibility, and safety margins. Standard methods optimize expected performance and offer limited protection against rare but severe outcomes, while robust formulations treat uncertainty conservatively without exploiting its probabilistic structure. We consider systems with measured state and an unknown, time-invariant parameter represented by a belief distribution. We propose a risk-sensitive belief-space Model Predictive Path Integral (MPPI) controller that plans under this belief, regularizes performance using Conditional Value-at-Risk (CVaR), and imposes a CVaR constraint on a trajectory safety margin over the horizon. For the exact risk-constrained formulation underlying this controller, we establish three properties: (1) the CVaR constraint implies a probabilistic safety guarantee, (2) the controller recovers the risk-neutral optimum as the objective risk weight tends to zero, and (3) a union-bound argument extends the per-horizon guarantee to cumulative safety over repeated solves. In contact-rich MuJoCo simulations of vision-guided dexterous stowing, where a manipulator inserts a grasped object into an occupied slot with pose uncertainty exceeding prescribed lateral clearance requirements, our method achieves 82\% success with zero contact violations at high risk aversion, compared with 55\% and 50\% for a risk-neutral configuration and a chance-constrained baseline, both of which incur nonzero exterior contact forces. Project page: \url{https://clintonenwerem.com/belief-cvar-mppi/}.
\end{abstract}

\section{Introduction}\label{sec:intro}
To illustrate the main problem addressed in this paper,  consider a manipulator that must place a grasped object into a nonempty receptacle (see \Cref{fig:runex} and \cite{park2025vulcan}). Successful execution requires estimating receptacle pose, transporting the object safely, and placing it with sufficient clearance while avoiding excessive contact forces. Even with perfect proprioception, noisy visual estimates of receptacle pose and geometry introduce latent uncertainty that, if unmodeled, degrades performance and causes constraint violation. Safe control therefore requires belief representations that support online updates and enable risk-aware actions. Beyond manipulation, safe control under latent uncertainty arises in autonomous driving in mixed traffic~\cite{tariqAutonomousVehicleOvertaking2022}, multi-robot navigation with partial observations~\cite{vandenberg2012motion}, and quadrotor attitude control under unknown wind disturbances~\cite{vahsBeliefControlBarrier2023a}.
\setlength{\fboxsep}{0pt}
\setlength{\fboxrule}{1pt}
\begin{figure}
\centering
\newlength{\figH}
\setlength{\figH}{0.18\textheight}
\begin{minipage}[t]{0.42\linewidth}
    \centering
    \fbox{\includegraphics[height=\figH,width=\dimexpr\linewidth-2\fboxrule\relax,keepaspectratio]{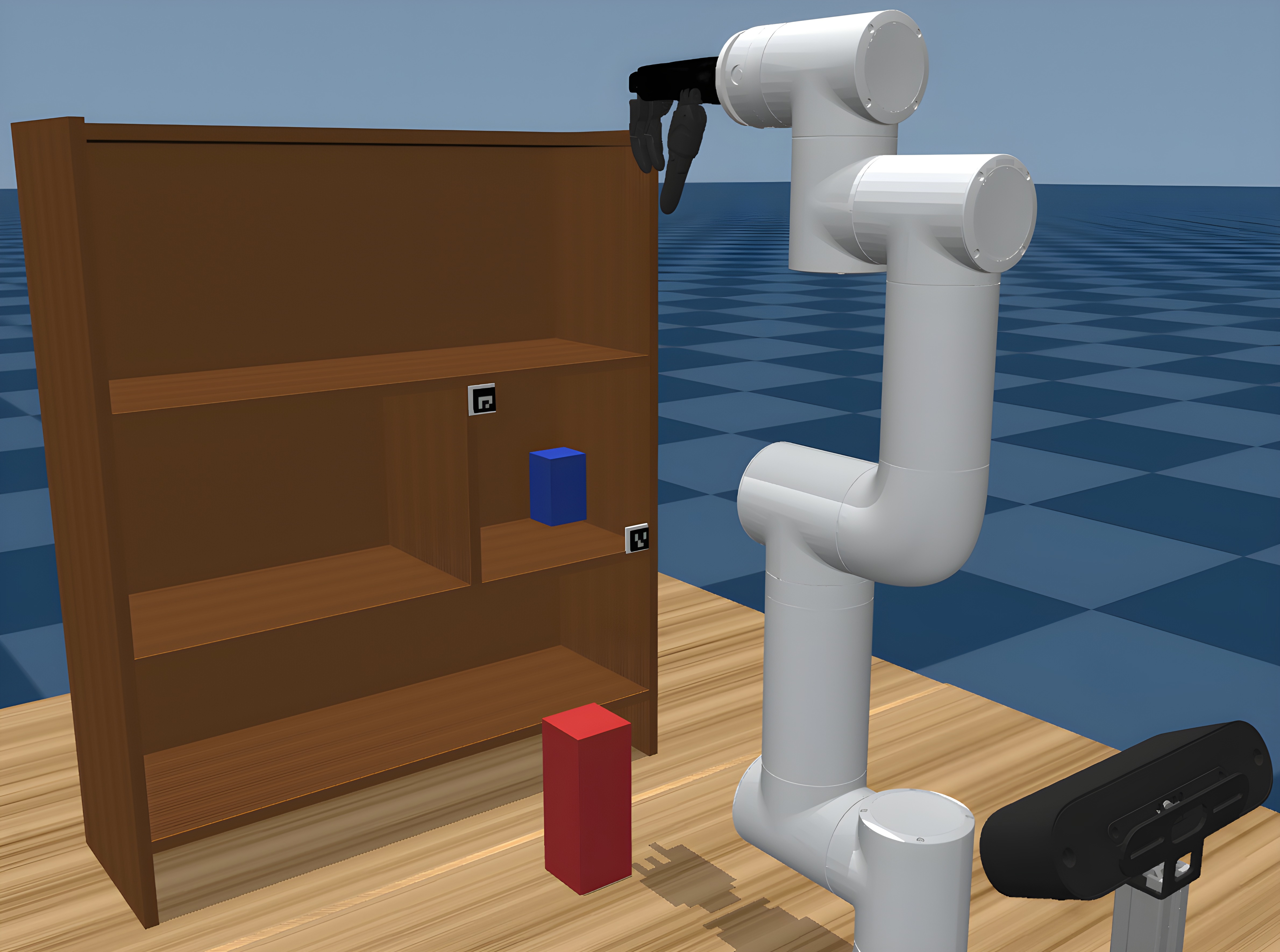}}
    \\[-2pt]
    {\small (a) Object Stowing Task.}
\end{minipage}\hfill%
\begin{minipage}[t]{0.27\linewidth}
    \centering
    \fbox{\begin{tikzpicture}
        \node[inner sep=0pt] (img) {\includegraphics[height=\figH, width=\dimexpr\linewidth-2\fboxrule\relax, keepaspectratio]{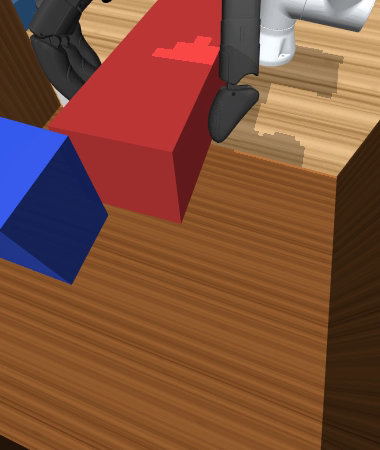}};
        \node[anchor=south east,xshift=-2pt,yshift=3pt,rounded corners=2pt,
              fill=red!15,draw=red!60!black,text=red!70!black,font=\scriptsize]
              at (img.south east) {{\sf Collision}};
    \end{tikzpicture}}\\[-2pt]
    {\small (b) Risk-Neutral.}
\end{minipage}\hfill%
\begin{minipage}[t]{0.27\linewidth}
    \centering
    \fbox{\begin{tikzpicture}
        \node[inner sep=0pt] (img) {\includegraphics[height=\figH, width=\dimexpr\linewidth-2\fboxrule\relax, keepaspectratio]{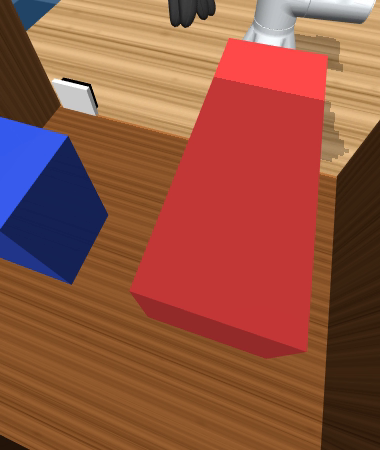}};
        \node[anchor=south east,xshift=-2pt,yshift=3pt,rounded corners=2pt,
              fill=green!15,draw=green!50!black,text=green!40!black,font=\scriptsize]
              at (img.south east) {{\sf Success}};
    \end{tikzpicture}}\\[-2pt]
    {\small (c) Risk-Aware.}
\end{minipage}
\caption{\textbf{Running Example.} (a). A robotic manipulator inserts a grasped object into an occupied receptacle with uncertain pose. (b). With low risk aversion ($\beta_s=0.50$), the controller allows trajectories that produce catastrophic contact forces. (c). With high risk aversion ($\beta_s=0.95$), the CVaR safety constraint maintains safer insertion margins and avoids contact. Here, $\beta_s$ is the CVaR confidence level for the safety constraint. \Cref{fig:stowing_sequences} shows the corresponding success and failure sequences.}
\label{fig:runex}
\end{figure}

\subsection{Related Work}\label{ssec:relwork}
\subsubsection{Risk Measures \& Risk-Sensitive Optimization} Coherent risk measures~\cite{artzner1999coherent}, particularly Conditional Value-at-Risk (CVaR)~\cite{rockafellar2000cvar}, are standard tools for capturing tail risk in stochastic programming~\cite{shapiro2014lectures}. In robotics, they appear in CVaR-constrained Markov Decision Processes (MDPs)~\cite{chow2015risk}, risk-sensitive reinforcement learning~\cite{noorani2025risk}, motion planning~\cite{enwerem2024robust,ahmadi_risk-averse_2021,hakobyan_risk-aware_2019}, and axiomatic risk assessment~\cite{majumdar2020risk}; see~\cite{akellaRiskAwareRoboticsTail2024} for a survey.

\subsubsection{Model-Predictive Control (MPC) \& Chance Constraints} Chance-constrained MPC~\cite{yin2024chance} replaces hard constraints with probabilistic violation bounds~\cite{nemirovski2006convex,mesbah2016stochastic}, admitting exact linear reformulations under Gaussian uncertainty for linear systems~\cite{blackmore2011chance}. In the nonlinear stochastic setting, sampling-based methods such as Model Predictive Path Integral (MPPI)~\cite{williams2017mppi} bypass the need for gradient computation entirely.

\subsubsection{Belief-Space Planning}Belief-space planning~\cite{platt2010belief,kaelbling1998pomdp} provides a principled framework for acting under state and parameter uncertainty. Particle filter representations~\cite{thrun2005probabilistic} are standard, integrating naturally with sampling-based planners~\cite{vandenberg2012motion} and belief-space feedback policies~\cite{agha2014firm}.

\subsubsection{Safety-Critical Control}
Control Barrier Functions (CBFs)~\cite{ames2017cbf} certify forward invariance of safe sets for control-affine systems, with extensions to stochastic~\cite{fushimi2025safe}, higher-order~\cite{xiao_high-order_2022}, and multiagent~\cite{enwerem2024safe,glotfelterNonsmoothBarrierFunctions2017} settings; see~\cite{Xiao2026LearnFeasSafeCtrl,cohen2023adaptive} for broader treatments. Closely related, \cite{vahs2026safety} guarantees probabilistic safety via belief-based control barrier functions, but omits risk measures in the safety constraint and latent parameters in the belief dynamics.

Our work differs in that uncertainty lies in \emph{latent parameters} influencing dynamics, costs, and safety margins, rather than only the state or disturbance distribution. We maintain a posterior over such parameters and invoke CVaR \emph{both} in the objective and in a trajectory safety constraint, a combination that is, to the best of our knowledge, novel.
\smallskip

\noindent\textbf{Contributions \& Outline:}
\vspace{-6pt}
\begin{enumerate}
\item \emph{Belief-Space Uncertainty Representation (\Cref{sec:prbform}):} We model a dynamical system whose dynamics, cost, and safety constraints depend on a latent parameter inferred indirectly from observations.

\item \emph{Risk-Constrained Belief-Space Optimization Formulation (\Cref{sec:method}):} We formulate safe control as a belief-space receding-horizon trajectory optimization problem that trades off expected performance against tail risk while enforcing CVaR safety constraints.
\item \emph{Belief-Space MPPI Control Synthesis Algorithm}: We present a belief-space MPPI algorithm for solving the optimization problem in \Cref{sec:method}.
\item \emph{Theoretical Properties (\Cref{sec:theory}):} We prove that the CVaR safety constraint in the risk-sensitive mathematical program introduced in \Cref{sec:method} implies a probabilistic safety guarantee, that the controller recovers risk-neutral behavior in the appropriate limit of the risk confidence level, and that the per-horizon guarantee extends to cumulative safety over repeated receding-horizon application via a union bound.
\item \emph{Experimental Validation}: We validate our results using physics-based simulations of a dexterous object relocation task within MuJoCo (\Cref{sec:casest}), comparing our results with a point-estimate chance-constrained baseline.
\end{enumerate}

\subsection{Notation and Preliminaries}\label{ssec:prelim}

We write \(\E[\cdot]\) for expectation, \(\Prob(\cdot)\) for probability, and \((z)^+ \text{ for } \max\{z, 0\}\). All random quantities are defined on a common probability space.
We use \(t\) for absolute discrete time and \(k \in \{0,\dots,H\}\) for the relative step index within a prescribed prediction horizon of length \(H\). We will also denote by \(x_{t_1:t_2}\) the finite sequence, given by \((x_{t_1}, x_{t_1+1}, \ldots, x_{t_2})\), where \(t_1\) and \(t_2\) are distinct discrete time steps, and \(t_2\,{>}\,t_1\).

\begin{definition}[Value-at-Risk]\label{def:var}
For a random variable \(Z\) with confidence level \(\beta \in (0,1)\), the Value-at-Risk is
\begin{equation}\label{eq:vardef}
\VaR_\beta(Z) = \inf\bigl\{z \in \R : \Prob(Z \le z) \ge \beta\bigr\}.
\end{equation}
\end{definition}

\begin{definition}[Conditional Value-at-Risk]\label{def:cvar}
For a random variable \(Z\) with finite first moment and confidence level \(\beta \in (0,1)\), we have
\begin{equation}\label{eq:cvardef}
\CVaR_\beta(Z) = \inf_{\eta \in \R}\left\{\eta + \frac{1}{1-\beta}\,\E\!\left[(Z - \eta)^+\right]\right\}.
\end{equation}
Here, \(\CVaR_\beta(Z)\) is the conditional expectation of \(Z\) beyond its \(\beta\)-quantile, satisfying \(\VaR_\beta(Z)\le \CVaR_\beta(Z)\)~\cite{rockafellar2000cvar}. For minimization, \(\beta{=}0.50\) is roughly risk-neutral, while \(\beta\in[0.9,1)\) emphasizes increasingly severe tail outcomes.
\end{definition}

\section{Problem Formulation}\label{sec:prbform}
We consider a stochastic system with measured state and a latent parameter,
\begin{equation}\label{eq:dyn}
x_{t+1} = f(x_t, u_t, \theta, w_t), \qquad z_t = g(x_t, \theta, v_t),
\end{equation}
where \(x_t \in \mathbb{R}^n\), \(u_t \in \mathcal{U} \subseteq \mathbb{R}^m\), \(\theta \in \Theta \subseteq \mathbb{R}^p\) is an unknown time-invariant latent parameter vector, \(z_t \in \mathbb{R}^q\), and \(f\), \(g\) are nonlinear process and measurement functions. At each decision time, \(x_t\) is available to the controller, whereas \(z_t\) provides indirect information about \(\theta\). The measurement model \(g\) induces the likelihood \(p(z_t~{\mid}~\theta, x_t)\), and the noise signals \(w_t \sim p_w\), \(v_t \sim p_v\) are mutually independent with known distributions, \(p_w\) and \(p_v\). Thus, the uncertainty represented by the belief is a fixed but unknown model parameter, rather than an unobserved state, an additive disturbance, or time-varying model error. Over a horizon of \(H\) steps starting from time \(t\), the control sequence \(\mathbf{u} = u_{t:t+H-1}\) lies in the \(H\)-fold Cartesian product \(\mathcal{U}^H \subseteq \mathbb{R}^{mH}\). The trajectory cost is
\vspace{-4pt}
\begin{equation}\label{eq:cost}
J_H(\mathbf{u}, \theta, x_t) = \sum_{k=0}^{H-1} \ell(x_{t+k}, u_{t+k}, \theta) + \phi(x_{t+H}, \theta),
\end{equation}
where \(\ell \colon \mathbb{R}^n \times \mathcal{U} \times \Theta \to \mathbb{R}_{\ge 0}\) is the stage cost and \(\phi \colon \mathbb{R}^n \times \Theta \to \mathbb{R}_{\ge 0}\) is the terminal cost. Since \(\theta\) is unknown and \(w_t\) is random, \(J_H(\mathbf{u}, \theta, x_t)\) is a random variable whose distribution is induced by \(b_t(\theta)\) and the process noise over the horizon. 
\begin{remark}[Shorthand Notation]\label{rem:expdep}
Hereafter, we write \(J_H\) for \(J_H(\mathbf{u}, \theta, x_t)\), with dependence on \(\mathbf{u}\), \(x_t\), and \(\theta\) implicit via~\eqref{eq:dyn}. We use \(J_H(\mathbf{u})\) when explicit control dependence is central, e.g., in \Cref{asm:jhfinmom,asm:uhcomp,thm:limit}, and \(J_H(\cdot)\) for arbitrary control arguments.
\end{remark}
\begin{definition}[Belief and Latent Parameter]\label{def:belief}
Since \(\theta\) is not directly observable, the controller maintains a posterior belief
\begin{equation}\label{eq:belup}
b_t(\theta) \;:=\; p\bigl(\theta \mid x_{0:t},\, z_{1:t}\bigr),
\end{equation}
where \(p(\cdot)\) is the joint probability law over \((\theta, x_{0:t}, z_{1:t})\), and \((x_{0:t}, z_{1:t})\) is the information available up to time \(t\). Accordingly, \(b_t\) is a parameter belief; a partially observed state would instead require a joint state--parameter belief.
\end{definition}
Exact inference over \(b_t(\theta)\) is generally intractable for nonlinear state and measurement models with latent parameter dependence, since the posterior admits no closed form. We therefore use a particle filter, which can represent multimodal beliefs, handle hybrid continuous-discrete latent states, and is simple to implement, requiring only the number of particles \(N_p\) to be specified~\cite{vahs2026safety,thrun2005probabilistic}.
\begin{definition}[Particle Belief]\label{def:partbel}
We approximate the belief \(b_t(\theta)\) by a set of \(N_p\) weighted particles \(\{(\theta^{(i)}, \omega_t^{(i)})\}_{i=1}^{N_p}\), with \(\omega_t^{(i)} \ge 0\) and \(\sum_i \omega_t^{(i)} = 1\). Upon applying control \(u_t\) and receiving observation \(z_{t+1}\), each particle is propagated through the process model~\eqref{eq:dyn},
\begin{equation}
    x_{t+1}^{(i)} = f\bigl(x_t, u_t, \theta^{(i)}, w_t^{(i)}\bigr), \ w_t^{(i)} \sim p_w,
\end{equation}
and the \emph{particle weights} are updated via the likelihood induced by the measurement model~\eqref{eq:dyn}
\begin{align}\label{eq:partbelupd}
\tilde{\omega}_{t+1}^{(i)} &= \omega_t^{(i)}\cdot p\bigl(z_{t+1} \mid \theta^{(i)}, x_{t+1}^{(i)} \bigr), \\
\omega_{t+1}^{(i)}         &= \frac{\tilde{\omega}_{t+1}^{(i)}}{\sum_{j=1}^{N_p} \tilde{\omega}_{t+1}^{(j)}},
\end{align}
with resampling applied when the effective sample size \(\hat{N}_{\mathrm{eff}} = \bigl(\sum_i (\omega_t^{(i)})^2\bigr)^{-1}\) falls below a threshold \(N_{\mathrm{thr}} \in \mathbb{N}\).
\end{definition}
\begin{definition}[Safety Function]\label{def:safety}
We encode safety by a continuously-differentiable function \(h(x,\theta)\) that defines the parameter-dependent safe set \(\mathcal{S}(\theta) = \{x \in \R^n \colon h(x,\theta) \ge 0\}\), with interior \(\mathrm{Int}\,\mathcal{S}(\theta)\), and boundary \(\partial\mathcal{S}(\theta)\). At time, \(t\), the safety function assumes the following values:
\vspace{-4pt}
\begin{equation}\label{eq:safe}
h(x_t,\theta)\begin{cases} > 0, & x_t \in \mathrm{Int}\,\mathcal{S}(\theta),\\ = 0, & x_t \in \partial \mathcal{S}(\theta),\\ < 0, & x_t \notin \mathcal{S}(\theta). \end{cases}
\end{equation}
\end{definition}\vspace{-7pt}\noindent \vspace{0pt}The function \(h\) may represent physical safety margins such as obstacle clearance or contact force limits (see \Cref{sec:casest}). Since instantaneous safety does not capture safety over the task horizon, we next define a trajectory-level safety notion.
\begin{definition}[Trajectory Safety Margin]\label{def:trjsfmargin}
Given a trajectory~$x_{t:t+H}{\equiv}\,\mathbf{x}$ (induced by \(\mathbf{u}\) and \(\theta\) via \eqref{eq:dyn}) with horizon \(H\) starting at time \(t\), its safety margin is\vspace{-4pt}
\begin{equation}\label{eq:margin}
M_H(\mathbf{x}, \theta) = \min_{k=0,\dots,H} \, h(x_{t+k}, \theta).
\end{equation}
\noindent For notational simplicity, hereafter, we will drop the \(\mathbf{x}\) and \(\theta\) arguments in \eqref{eq:margin} and simply write \(M_H\) (the exceptions in \Cref{rem:expdep} carry over). A positive \(M_H\) indicates the trajectory remains in \(\mathcal{S}(\theta)\), while \(M_H < 0\) indicates a violation, hence \({-}{M_H}\) may be interpreted as a \emph{violation variable}. Because \(M_H\) records the worst margin over the horizon, it distinguishes violation depth but not the number or duration of violations attaining the same minimum.
\end{definition}
\begin{problem}[Risk-Constrained Belief-Space Control]\label{prb:rsbel}
Given a stochastic system with latent parameters \(\theta\), 
posterior belief \(b_t(\theta)\) at time \(t\), planning horizon \(H\), stage and terminal costs defining the horizon cost \(J_H\), and trajectory safety margin \(M_H\), find a control sequence \(\mathbf{u}^\star \in \mathcal{U}^H\) such that
\begin{align}
\mathbf{u}^\star \in \arg\min_{\mathbf{u} \in \mathcal{U}^H} \quad
& \E_{b_t}[J_H] + \lambda_r\,\CVaR_{\beta_c}(J_H) \label{eq:obj} \\
\text{subject to} \quad
& \CVaR_{\beta_s}({-}{M_H}) \le 0. \label{eq:cvarcnst}
\end{align}
\end{problem}
\begin{remark}[Underlying Distribution and Problem Data]\label{rem:distprbdata}In \Cref{prb:rsbel}, the expectation and risk measures are taken with respect to the posterior belief \(b_t(\theta)\) and process noise over the planning horizon. The parameter \(\beta_c\) controls tail sensitivity in the performance objective, \(\beta_s\) determines the degree of safety conservativeness and provides a hard probabilistic bound on the average tail safety-margin violation (see \Cref{thm:safety}), and \(\lambda_r \in (0,1)\) weights expected performance against tail risk. Thus, while objective \eqref{eq:obj} favors accurate and efficient task completion, the risk constraint \eqref{eq:cvarcnst} regulates the worst-case tail of the trajectory safety margin.
\end{remark}
\begin{remark}[Monotonicity in \(\beta_c\)]\label{rem:mono}
Since \(\CVaR_{\beta_c}(Z)\) equals the conditional expectation of \(Z\) above its \(\beta_c\)-quantile, it is non-decreasing in \(\beta_c\)~\cite{rockafellar2000cvar}. Consequently, increasing \(\beta_c\) with \(\lambda_r\,{>}\,0\) strictly increases the risk penalty in~\eqref{eq:obj}, producing more conservative trajectories with respect to the performance cost's variability.
\end{remark}

\section{Risk-Sensitive Belief-Space MPPI}\label{sec:method}
\begin{algorithm}[t]
\caption{Risk-Sensitive Belief-Space MPPI}
\label{alg:belmpc}
\DontPrintSemicolon
\KwIn{State \(x_t\), Prior belief \(b_{t-1}(\theta)\)}
\KwResult{Control \(u_t\)}
Observe measurement \(z_t\) and update belief \(b_{t-1}(\theta)\rightarrow b_t(\theta)\) via \eqref{eq:partbelupd}\;
Sample \(\theta^{(i)} \sim b_t(\theta)\) for \(i = 1, \dots, N_p\)\;
\For{\(j = 1, \dots, K\)}{
    Sample control sequence \(\mathbf{u}^{(j)}\)\;
    \For{\(i = 1, \dots, N_p\)}{
        Simulate trajectory via~\eqref{eq:rollout}\;
        Compute cost \(J_H^{(i,j)}\) and margin \(M_H^{(i,j)}\)\;
    }
    Estimate \(\E_{b_t}[J_H^{(j)}]\), \(\CVaR_{\beta_c}(J_H^{(j)})\), \(\CVaR_{\beta_s}(-M_H^{(j)})\)\;
    \tcp*[l]{Update \(\mathbf{u}^{(j)}\) via MPPI weighting~\cite{williams2017mppi}}
    Compute score \(S^{(j)}\) via~\eqref{eq:score}\;
    Compute weight \(\rho^{(j)}\) via~\eqref{eq:mppiweight}\;
}
\(\mathbf{u}^\star \leftarrow \hat{\mathbf{u}} + \sum_{j=1}^{K} \rho^{(j)}\, \boldsymbol{\epsilon}^{(j)}\) via~\eqref{eq:mppiupdate}\;
Apply first element \(u_t^\star\) of \(\mathbf{u}^\star\)\;
\end{algorithm}

\subsection{MPPI Algorithm}\label{ssec:alg}
Following \cite{williams2017mppi}, we solve \Cref{prb:rsbel} via MPPI importance sampling-based trajectory evaluation (Algorithm~\ref{alg:belmpc}). At each control step, we draw \(N_p\) latent parameter samples \(\theta^{(i)} \sim b_t(\theta)\), and roll out \(K\) candidate control sequences through~\eqref{eq:dyn}
\begin{equation}\label{eq:rollout}
\hspace{-5.8pt}x_{t+k+1}^{(i)} = f\bigl(x_{t+k}^{(i)}, u_{t+k}^{(j)}, \theta^{(i)}, w_{t+k}^{(i)}\bigr), \ k = 0, \dots, H-1,
\end{equation}
where \(j =1, \ldots, K\) indexes the candidate sequence and \(i = 1, \dots, N_p\) indexes the belief particle. Each candidate sequence is a perturbation of the current estimate \(\hat{\mathbf{u}} = \hat{u}_{t:t+H-1} \in \mathcal{U}^H\), initialized at zero for the first solve and thereafter by shifting the previous MPPI solution, by i.i.d.\ Gaussian noise (\(\boldsymbol{\epsilon}^{(j)}\))
\begin{equation}\label{eq:ucand}
\mathbf{u}^{(j)} = \hat{\mathbf{u}} + \boldsymbol{\epsilon}^{(j)},\
\boldsymbol{\epsilon}^{(j)} = (\epsilon_{t:t+H-1}^{(j)}),\
\epsilon_{t+k}^{(j)} \sim \mathcal{N}(0,\Sigma),
\end{equation}
so that \(u_{t+k}^{(j)} = \hat{u}_{t+k} + \epsilon_{t+k}^{(j)}\) is the \(k\)-th element of \(\mathbf{u}^{(j)}\), with $\Sigma \in \mathbb{R}^{m\times m}$ denoting the sample covariance. The rollout procedure~\eqref{eq:rollout} produces per-particle trajectory costs \(\{J_H^{(i,j)}\}^{N_p}_{i=1}\) and safety margins \(\{M_H^{(i,j)}\}^{N_p}_{i=1}\) for each candidate \(j\), where \(J_H^{(i,j)}\) and \(M_H^{(i,j)}\) are evaluated on the trajectory \(\mathbf{x}^{(i,j)}\) induced by \(\mathbf{u}^{(j)}\) and particle \(\theta^{(i)}\) via~\eqref{eq:dyn}. From these, we obtain a per-candidate score
\begin{equation}\label{eq:score}
\begin{aligned}
S^{(j)}&=\widehat{\E}[J_H^{(j)}] + \lambda_r\,\widehat{\CVaR}_{\beta_c}\!\left(J_H^{(j)}\right) \\
&+ \mu\,\Bigl(\widehat{\CVaR}_{\beta_s}\!\left(-M_H^{(j)}\right)\Bigr)^+, \ \text{where }
\end{aligned}
\end{equation}\vspace{-5pt}
\begin{equation}\label{eq:empexp}
\begin{aligned}
\widehat{\E}[J_H^{(j)}]
&= \frac{1}{N_p}\sum_{i=1}^{N_p} J_H^{(i,j)}, \\[-0.6ex]
\widehat{\CVaR}_{\beta_c}\!\left(J_H^{(j)}\right)
&= \frac{1}{\lceil(1-\beta_c)N_p\rceil}
   \sum_{i \in \mathcal{I}_{\beta_c}^{(j)}} J_H^{(i,j)},\text{ and}\\[-0.6ex]
\widehat{\CVaR}_{\beta_s}\!\left(-M_H^{(j)}\right)
&= \frac{1}{\lceil(1-\beta_s)N_p\rceil}
   \sum_{i \in \mathcal{I}_{\beta_s}^{(j)}} \bigl(-M_H^{(i,j)}\bigr)
\end{aligned}
\end{equation}
are the empirical expected cost, performance risk, and safety risk corresponding to candidate \(j\), and \(\mu>0\) is the penalty weight for a positive empirical safety risk. In \eqref{eq:empexp}, \(\mathcal{I}_{\beta_c}^{(j)}\) indexes the \(\lceil(1-\beta_c)N_p\rceil\) highest values of \(\{J_H^{(i,j)}\}^{N_p}_{i=1}\), and \(\mathcal{I}_{\beta_s}^{(j)}\) indexes the \(\lceil(1-\beta_s)N_p\rceil\) largest values of \(\{-M_H^{(i,j)}\}^{N_p}_{i=1}\). Each candidate control sequence \(\mathbf{u}^{(j)} \in \mathcal{U}^H\) is assigned an \emph{importance sampling weight} \(\rho^{(j)}\) \cite{williams2017mppi} computed via
\begin{equation}\label{eq:mppiweight}
\frac{1}{\eta}\exp\!\left(-\frac{1}{\lambda}\left(S^{(j)}+\lambda
\sum_{k=0}^{H-1}
\bigl(\hat{u}_{t+k}-\bar{u}_{t+k}\bigr)^\top
\Sigma^{-1}\epsilon_{t+k}^{(j)}\right)\right),
\end{equation}
where $\eta\,{=}\,\sum_j \rho^{(j)}$ normalizes the weights, $\lambda>0$ is the temperature that controls the sharpness of the weight distribution across control sequence candidates, $\hat{u}_{t+k}$ is the current control estimate, $\bar{u}_{t+k}$ is the nominal distribution mean, and $\epsilon_{t+k}^{(j)}$ is the perturbation defined in \eqref{eq:ucand}. The nominal mean is a proportional task-space velocity toward the active approach or insertion reference, saturated at the velocity limit; the warm-started sequence is blended toward this reference with weight \(0.6\). We then compute the optimal control sequence via the MPPI update rule \cite{williams2017mppi}
\vspace{-4pt}
\begin{equation}\label{eq:mppiupdate}
\begin{array}{c}
\mathbf{u}^\star \leftarrow \hat{\mathbf{u}} + \sum_{j=1}^{K} \rho^{(j)}\, \boldsymbol{\epsilon}^{(j)},
\end{array}
\end{equation}
and shift \(\mathbf{u}^\star\) by one step to obtain \(\hat{\mathbf{u}}\) for the next solve, appending a zero terminal input. This warm start preserves useful structure across successive MPC problems, while every candidate is re-evaluated from the current measured state. Because finite \(K\) and \(N_p\) and the penalty in \eqref{eq:score} approximate \Cref{prb:rsbel}, the sampled MPPI controller does not guarantee satisfaction of the hard constraint imposed by the exact problem.

\subsection[Online Solution of Problem 1]{Online Solution of \Cref{prb:rsbel}}\label{ssec:opt}
At each time step \(t\), Algorithm~\ref{alg:belmpc} takes the prior belief \(b_{t-1}(\theta)\) and observation \(z_t\), updates to the posterior \(b_t(\theta)\) via \eqref{eq:partbelupd}, and solves \Cref{prb:rsbel} online. Only the first element \(u_t^\star\) of the resulting sequence \(\mathbf{u}^\star\) is applied, while the remainder warm-starts the next MPPI iteration at \(t+1\).

\section{Theoretical Guarantees}\label{sec:theory}
We now establish three properties of the controller defined by~\eqref{eq:obj}--\eqref{eq:cvarcnst}: a probabilistic safety guarantee that follows from the CVaR constraint~(\Cref{thm:safety}), a consistency result showing that the formulation recovers risk-neutral behavior in the appropriate limit~(\Cref{thm:limit}), and a cumulative safety bound for receding-horizon execution~(\Cref{thm:cumulative})
\begin{assumption}\label{asm:jhfinmom}
The random trajectory cost \(J_H(\mathbf{u})\) has finite first moment for every \(\mathbf{u} \in \mathcal{U}^H\).
\end{assumption}
\begin{assumption}\label{asm:uhcomp}
The admissible control set \(\mathcal{U}^H\) is compact, and the mappings
\(\mathbf{u} \mapsto \E_{b_t}[J_H(\mathbf{u})]\), \(\mathbf{u} \mapsto \CVaR_{\beta_c}(J_H(\mathbf{u}))\), and \(\mathbf{u} \mapsto \CVaR_{\beta_s}(-M_H(\mathbf{u}))\) are continuous on \(\mathcal{U}^H\).\end{assumption}
\Cref{asm:jhfinmom} holds for bounded or polynomially-growing costs, including standard quadratic functionals like~\eqref{eq:lossfcn}~\cite{shapiro2014lectures}. \Cref{asm:uhcomp} holds since joint velocity commands are saturated to a bounded set~\cite{williams2017mppi}, and continuity of \(\E_{b_t}\) and \(\CVaR_{\beta_s}\) follows from continuity of \(f\) and \(g\) in \(\mathbf{u}\)~\cite{rockafellar2000cvar,thrun2005probabilistic}. Thus, neither assumption is restrictive for the problem class considered here.

\begin{theorem}[CVaR Safety Implication]\label{thm:safety}
Let \(\beta_s \in (0,1)\) and suppose
\(\CVaR_{\beta_s}\!\bigl(-M_H\bigr) \le 0\). Then
\begin{equation}\label{eq:probsafe}
\begin{array}{l}
\Prob\bigl(M_H \ge 0\bigr) \ge \beta_s.
\end{array}
\end{equation}
\end{theorem}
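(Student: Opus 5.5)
Set \(Z := -M_H\). The plan is to pass from the CVaR bound to a VaR bound, and then from the VaR bound to a probability statement.

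\emph{Step 1: from CVaR to VaR.} By \Cref{def:cvar}, \(\VaR_{\beta_s}(Z) \le \CVaR_{\beta_s}(Z)\). Combined with the hypothesis \(\CVaR_{\beta_s}(Z) \le 0\), this gives \(\VaR_{\beta_s}(Z) \le 0\).

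Finiteness of the first moment of \(Z\) is needed for the CVaR to be well defined. I would note it as implicit in the hypothesis, since the constraint value is a real number. If one prefers not to rely on the VaR--CVaR inequality as a citation, it can be proved directly from \eqref{eq:cvardef}. For any \(\eta < \VaR_{\beta_s}(Z)\) we have \(\Prob(Z > \eta) > 1-\beta_s\). For such \(\eta\) the objective \(\eta + \E[(Z-\eta)^+]/(1-\beta_s)\) is at least \(\eta\), and it is nondecreasing as \(\eta\) decreases, because its left derivative in \(\eta\) is \(1 - \Prob(Z>\eta)/(1-\beta_s) < 0\). So the infimum is attained at some \(\eta \ge \VaR_{\beta_s}(Z)\), where the objective is \(\ge \eta \ge \VaR_{\beta_s}(Z)\).

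\emph{Step 2: from VaR to probability.} By \eqref{eq:vardef}, \(\VaR_{\beta_s}(Z)\) is the infimum of the set \(\{z : \Prob(Z\le z) \ge \beta_s\}\). Because the CDF \(F_Z\) is right-continuous, this infimum is attained, so \(\Prob(Z \le \VaR_{\beta_s}(Z)) \ge \beta_s\). Monotonicity of \(F_Z\) together with \(\VaR_{\beta_s}(Z)\le 0\) then gives
\[
\Prob(Z\le 0) \;\ge\; \Prob\bigl(Z\le \VaR_{\beta_s}(Z)\bigr) \;\ge\; \beta_s .
\]
Since \(\{Z \le 0\} = \{M_H \ge 0\}\), this is exactly \eqref{eq:probsafe}.

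The only delicate point is the attainment of the infimum in Step 2. It relies on right-continuity of the CDF: the set \(\{z : F_Z(z)\ge\beta_s\}\) is a closed half-line \([\VaR_{\beta_s}(Z),\infty)\), which is nonempty and bounded below for \(\beta_s\in(0,1)\). Atoms of \(Z\) at \(0\), that is, trajectories that exactly touch \(\partial\mathcal S(\theta)\), are harmless. They are counted as safe in both \(\{M_H\ge 0\}\) and \(\mathcal S(\theta)\), consistent with \Cref{def:safety}. The probability is taken with respect to \(b_t\) and the process noise, as specified in \Cref{rem:distprbdata}.
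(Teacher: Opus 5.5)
Your proof is correct and follows the paper's argument exactly: \(\VaR_{\beta_s}(-M_H)\le\CVaR_{\beta_s}(-M_H)\le 0\), then the quantile definition \eqref{eq:vardef} gives \(\Prob(-M_H\le 0)\ge\beta_s\), and your right-continuity argument spells out the attainment step that the paper leaves implicit. One small slip in your optional direct proof of the VaR--CVaR inequality: \(1-\Prob(Z>\eta)/(1-\beta_s)\) is the right derivative of the Rockafellar--Uryasev objective (the left derivative has \(\Prob(Z\ge\eta)\) instead), but both are negative for \(\eta<\VaR_{\beta_s}(Z)\), so your conclusion is unaffected.
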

\begin{proof}
Let \(L{:=}{-}{M_H}\). By the Rockafellar--Uryasev CVaR representation~\cite{rockafellar2000cvar} (see \cref{eq:vardef,eq:cvardef}),
\[
\VaR_{\beta_s}(L) \;\le\; \CVaR_{\beta_s}(L) \;\le\; 0,
\]
where the first inequality holds since \(\CVaR_{\beta_s}(L){=}\inf_{\eta\in\mathbb{R}}\bigl\{\eta + \tfrac{1}{1-\beta_s}\E[(L-\eta)^+]\bigr\} \ge \VaR_{\beta_s}(L)\), and the second is the constraint~\eqref{eq:cvarcnst}. Hence \(\VaR_{\beta_s}(L)\le 0\), which by \Cref{def:var} implies \(\Prob(L\le 0)\ge\beta_s\). Since \(L{:=}{-}{M_H}\), the events \(\{L\le 0\}=\{{-}{M_H}\le 0\}=\{M_H\ge 0\}\) are identical, so \(\Prob(L\le 0)\ge\beta_s\) yields~\eqref{eq:probsafe}. 
\end{proof}

\begin{remark}
Let \(V{:=}({-}{M_H})^+{=}\max\{{-}{M_H},0\}\), which is nonnegative and represents the magnitude of the safety violation. By Markov's inequality \cite{durrett2019probability}, for any \(\varepsilon\,{>}\,0\),
\begin{equation}
\Prob({-}{M_H} \ge \varepsilon) = \Prob(V \ge \varepsilon) \le \frac{\E[V]}{\varepsilon}.
\end{equation}
Thus, a mean-based bound controls safety only through the expected violation magnitude and can be loose when rare but severe violations dominate the tail. By contrast, the guarantee in \eqref{eq:probsafe} follows from the CVaR constraint \(\CVaR_{\beta_s}\left({-}{M_H}\le 0\right)\), which directly constrains the adverse tail of the safety-margin distribution. Hence \eqref{eq:probsafe} is stronger than what one obtains from Markov's inequality applied only to the expected violation.
\end{remark}

\begin{theorem}[Risk-Neutral Limit]\label{thm:limit}
Under Assumptions~\ref{asm:jhfinmom} and \ref{asm:uhcomp}, let \(\mathbf{u}^\star(\lambda_r)\) denote a minimizer of~\eqref{eq:obj} subject to~\eqref{eq:cvarcnst}.
Then every cluster point of \(\{\mathbf{u}^\star(\lambda_r)\}\) as \(\lambda_r\;{\downarrow}\;0\), with \(\lambda_r\) approaching zero from above, minimizes the risk-neutral objective \(\E_{b_t}[J_H]\) over the same feasible set.
\end{theorem}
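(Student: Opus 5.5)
The plan is a standard penalty-vanishing argument on the fixed feasible set. Write
\[
F := \bigl\{\mathbf{u}\in\mathcal{U}^H : \CVaR_{\beta_s}(-M_H(\mathbf{u}))\le 0\bigr\},
\]
together with $E(\mathbf{u}) := \E_{b_t}[J_H(\mathbf{u})]$ and $C(\mathbf{u}) := \CVaR_{\beta_c}(J_H(\mathbf{u}))$.

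\textbf{Step 1: compactness of $F$.} By \Cref{asm:uhcomp}, the map $\mathbf{u}\mapsto \CVaR_{\beta_s}(-M_H(\mathbf{u}))$ is continuous, so $F$ is a closed subset of the compact set $\mathcal{U}^H$ and is therefore compact. I assume $F\neq\emptyset$, which is implicit in the statement's reference to a minimizer $\mathbf{u}^\star(\lambda_r)$. Since $E$ and $C$ are continuous (and finite by \Cref{asm:jhfinmom}), Weierstrass gives existence of minimizers of both $E+\lambda_r C$ and $E$ over $F$. Moreover, any cluster point $\bar{\mathbf{u}}$ of $\mathbf{u}^\star(\lambda_r)$ as $\lambda_r\downarrow 0$ lies in $F$.

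\textbf{Step 2: uniform bound on $C$.} $C$ is continuous on the compact set $F$, so $|C|\le B<\infty$ there. In fact $C\ge 0$, because $\ell,\phi\ge 0$ imply $J_H\ge 0$; only boundedness is needed.

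\textbf{Step 3: comparison inequality.} Fix any $\mathbf{v}\in F$. Optimality of $\mathbf{u}^\star(\lambda_r)$ gives
\[
E(\mathbf{u}^\star(\lambda_r)) + \lambda_r C(\mathbf{u}^\star(\lambda_r)) \le E(\mathbf{v}) + \lambda_r C(\mathbf{v}),
\]
and hence
\[
E(\mathbf{u}^\star(\lambda_r)) \le E(\mathbf{v}) + 2\lambda_r B.
\]

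\textbf{Step 4: pass to the limit.} Take a sequence $\lambda_{r,n}\downarrow 0$ with $\mathbf{u}^\star(\lambda_{r,n})\to\bar{\mathbf{u}}$. Continuity of $E$ lets us take limits in Step 3, giving $E(\bar{\mathbf{u}})\le E(\mathbf{v})$ for every $\mathbf{v}\in F$. Combined with $\bar{\mathbf{u}}\in F$ from Step 1, this shows $\bar{\mathbf{u}}$ minimizes $E$ over the same feasible set. As a by-product, the optimal values converge to $\min_F E$. Cluster points exist by compactness, so the statement is not vacuous.

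\textbf{Main obstacle.} Once the assumptions are granted, the hard part is not the limit argument but justifying its two ingredients. The first is continuity of $C$ and of the constraint function, which yields the uniform bound $B$ and closedness of $F$. The second is that the feasible set does not depend on $\lambda_r$. The latter is immediate, since $\lambda_r$ enters only the objective. The former is exactly \Cref{asm:uhcomp}, so I would simply cite it. If only lower semicontinuity of $E$ were available, the same argument still works: Step 4 would use $E(\bar{\mathbf{u}})\le\liminf_n E(\mathbf{u}^\star(\lambda_{r,n}))$.
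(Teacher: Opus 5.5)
Your proposal is correct and follows essentially the same route as the paper: compactness of the $\lambda_r$-independent feasible set, a uniform bound on $\CVaR_{\beta_c}(J_H(\cdot))$ over that set, the optimality comparison against an arbitrary feasible $\mathbf{v}$, and passage to the limit along a convergent subsequence. The only difference is cosmetic: you write the vanishing penalty as the explicit estimate $E(\mathbf{u}^\star(\lambda_r))\le E(\mathbf{v})+2\lambda_r B$, whereas the paper states the same fact as uniform convergence of $\varphi(\cdot,\lambda_r)$ to $\E_{b_t}[J_H(\cdot)]$.
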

\begin{proof}
Denote the feasible set \(\mathcal{U}^H(\beta_s)\), defined as
\begin{equation}
\mathcal{U}^H(\beta_s) = \{\mathbf{u} \in \mathcal{U}^H \colon \CVaR_{\beta_s}({-}M_H(\mathbf{u})) \le 0\},
\end{equation}
which is independent of \(\lambda_r\). Since \(\mathbf{u}\mapsto\CVaR_{\beta_s}({-}M_H(\mathbf{u}))\) is continuous by~\Cref{asm:uhcomp}, \(\mathcal{U}^H(\beta_s)\) is a closed subset of the compact set \(\mathcal{U}^H\), hence compact.
Write the objective \eqref{eq:obj} as \(\varphi(\mathbf{u}, \lambda_r)\,{=}\,\E_{b_t}[J_H(\mathbf{u})] + \lambda_r\,\CVaR_{\beta_c}(J_H(\mathbf{u}))\).
Under \Cref{asm:uhcomp}, \(J_H(\mathbf{u})\) depends continuously on \(\mathbf{u}\), so \(\CVaR_{\beta_c}(J_H(\cdot))\) is continuous on \(\mathcal{U}^H(\beta_s)\).
Since \(\mathcal{U}^H(\beta_s)\) is compact, \(\CVaR_{\beta_c}(J_H(\cdot))\) attains a finite supremum \(C < \infty\) on \(\mathcal{U}^H(\beta_s)\), and therefore
\begin{equation}
\hspace{-7.5pt}\sup_{\mathbf{u} \in \mathcal{U}^H(\beta_s)} \bigl|\lambda_r\,\CVaR_{\beta_c}(J_H(\mathbf{u}))\bigr| \;{\le}\; {\lambda_r} C \;\to\; 0 \ \text{as } \lambda_r\;{\downarrow}\;0.
\end{equation}
Hence \(\varphi(\cdot,\lambda_r) \to \E_{b_t}[J_H(\cdot)]\) uniformly on \(\mathcal{U}^H(\beta_s)\).
Since \(\mathcal{U}^H(\beta_s)\) is compact, every sequence \(\{\mathbf{u}^\star(\lambda_r)\}\) has a cluster point \(\bar{\mathbf{u}} \in \mathcal{U}^H(\beta_s)\). For any such cluster point and any \(\mathbf{v} \in \mathcal{U}^H(\beta_s)\), the minimality of \(\mathbf{u}^\star(\lambda_r)\) gives \(\varphi(\mathbf{u}^\star(\lambda_r),\lambda_r) \le \varphi(\mathbf{v},\lambda_r)\). Taking the limit along the subsequence realizing \(\mathbf{u}^\star(\lambda_r)\to\bar{\mathbf{u}}\), and by uniform convergence~\cite{durrett2019probability}, we get
\begin{align}
\E_{b_t}[J_H(\bar{\mathbf{u}})]
&= \lim_{\lambda_r\;{\downarrow}\;0} \varphi(\mathbf{u}^\star(\lambda_r),\lambda_r)\nonumber\\
&\;\le\; \lim_{\lambda_r\;{\downarrow}\;0} \varphi(\mathbf{v},\lambda_r)
= \E_{b_t}[J_H(\mathbf{v})].
\end{align}
Since we chose \(\mathbf{v}\in\mathcal{U}^H(\beta_s)\) arbitrarily, it follows that \(\bar{\mathbf{u}} \in \arg\min_{\mathbf{u} \in \mathcal{U}^H(\beta_s)} \E_{b_t}[J_H(\mathbf{u})]\).
\end{proof}

\begin{theorem}[Cumulative Receding-Horizon Safety]\label{thm:cumulative}
Suppose the controller re-solves~\eqref{eq:obj}--\eqref{eq:cvarcnst} at $T$ successive time steps with horizon $H$ and safety level $\beta_s\in(0,1)$, satisfying~\eqref{eq:cvarcnst} at every re-solve. Let $M^{(j)}_H$ denote the safety margin at the $j$-th solve. Then
\vspace{-4pt}
\begin{equation}
\begin{array}{l}\label{eq:cumsafe}
\Prob\!\Bigl(\bigcap_{j=1}^{T}\bigl\{M^{(j)}_H\ge 0\bigr\}\Bigr)\;\ge\;1-T(1-\beta_s).
\end{array}
\end{equation}
The intersection in \eqref{eq:cumsafe} is the event that every predicted horizon at the \(T\) successive solves has a nonnegative safety margin.
\end{theorem}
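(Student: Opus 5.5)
The plan is to combine the per-solve guarantee of \Cref{thm:safety} with a union bound (Boole's inequality) over the complementary violation events. We impose no independence assumption across solves, since the beliefs $b_t$ at successive times depend on shared data and on the applied controls.

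\emph{Step 1 (per-solve violation bound).} Fix a solve index $j\in\{1,\dots,T\}$. By hypothesis, the control sequence returned at the $j$-th solve satisfies \eqref{eq:cvarcnst}, i.e., $\CVaR_{\beta_s}(-M^{(j)}_H)\le 0$. This risk measure is taken with respect to the belief and process noise at that solve. Applying \Cref{thm:safety} directly gives $\Prob(M^{(j)}_H\ge 0)\ge\beta_s$, and therefore
\[
\Prob\bigl(M^{(j)}_H<0\bigr)\le 1-\beta_s .
\]

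\emph{Step 2 (complement and union bound).} By De Morgan's law, the complement of $\bigcap_{j=1}^T\{M^{(j)}_H\ge 0\}$ is $\bigcup_{j=1}^T\{M^{(j)}_H<0\}$. Subadditivity of probability then gives
\[
\Prob\Bigl(\bigcup_{j=1}^T\{M^{(j)}_H<0\}\Bigr)\le\sum_{j=1}^T\Prob\bigl(M^{(j)}_H<0\bigr)\le T(1-\beta_s).
\]
Taking complements yields \eqref{eq:cumsafe}. When $T(1-\beta_s)\ge 1$, the bound is vacuous but still true.

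\emph{Main obstacle (measure-theoretic consistency).} The guarantee in Step 1 holds under the probability law used at the $j$-th solve, namely the posterior $b_{t_j}$ and the noise over that horizon. This law is conditional on the information available at time $t_j$. To place all $T$ events in one probability space and apply the union bound, I would argue as follows.

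\begin{enumerate}
\item Take the conditional statement $\Prob(M^{(j)}_H<0\mid \mathcal{F}_{t_j})\le 1-\beta_s$ almost surely, where $\mathcal{F}_{t_j}$ is the information up to $t_j$. This holds because the constraint is enforced for every realization of the information.
\item Take expectations via the tower property to obtain the unconditional bound $\Prob(M^{(j)}_H<0)\le 1-\beta_s$.
\item This relies on the paper's standing convention that all random quantities live on a common probability space. It also requires the posterior $b_{t_j}$ to be the true conditional law of $\theta$, which holds for the exact problem rather than its particle approximation.
\end{enumerate}

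After this step, the union bound in Step 2 applies verbatim, with no independence needed.
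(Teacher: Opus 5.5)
Your proof is correct and follows the paper's argument: apply \Cref{thm:safety} at each solve to get $\Prob(M^{(j)}_H<0)\le 1-\beta_s$, then take a union bound over the violation events and pass to the complement. Your conditioning and tower-property step makes explicit what the paper's two-line proof leaves implicit, namely the common-probability-space convention and the requirement that $b_{t_j}$ be the exact posterior.
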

\begin{proof}
By \Cref{thm:safety}, $\Prob(M^{(j)}_H\,{<}\,0)\le 1-\beta_s$ for each $j$. A union bound \cite{durrett2019probability} gives
\[
\Prob\!\Bigl(\bigcup_{j=1}^T\{M^{(j)}_H\,{<}\,0\}\Bigr)\le T(1-\beta_s),
\]
and taking the complement yields~\eqref{eq:cumsafe}.
\end{proof}
\begin{remark}
The bound is non-trivial while $T<1/(1-\beta_s)$; for $\beta_s=0.90$ this permits up to $T=9$ solves before it degenerates. For larger $T$, one can tighten $\beta_s$ adaptively or apply Boole--Fr\'echet corrections~\cite{hunterUpperBoundProbability1976} when violations are positively correlated. Because successive horizons overlap, the union bound does not exploit dependence among their safety events and can therefore be conservative.
\end{remark}

\section{Case Study: Vision-Guided Dexterous Stowing}\label{sec:casest}
We instantiate our framework on a simulated stowing task, in which a manipulator inserts a grasped box into a narrow bookcase slot with uncertain geometry (\Cref{fig:tspart}). The slot imposes clearance and contact force constraints that the controller must respect, as detailed below.
\subsection{Object Stowing Task and Simulation Environment}\label{ssec:task}
\begin{figure}
    \centering
    \fbox{\includegraphics[width=0.8\linewidth]{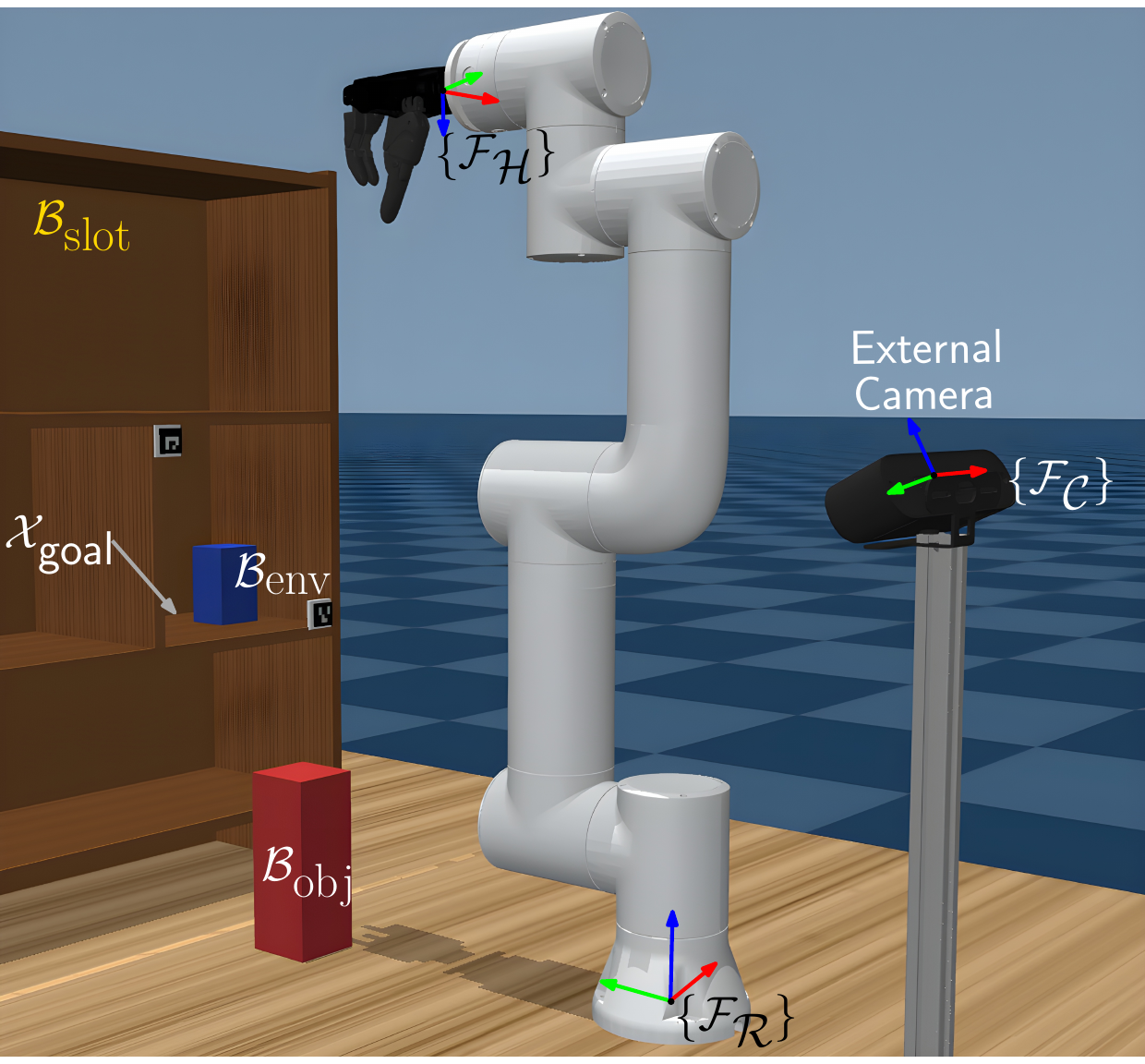}}
    \caption{
    \textbf{Object Stowing Task.}
    $\mathcal{F}_{\mathcal{R}}$, $\mathcal{F}_{\mathcal{H}}$, and $\mathcal{F}_{\mathcal{C}}$ denote the robot base, hand, and camera frames. $\mathcal{B}_{\mathrm{obj}}$ is the object to be transported, $\mathcal{B}_{\mathrm{slot}}$ denotes the receptacle, and $\mathcal{B}_{\mathrm{env}}$ represents previously stowed objects inside the receptacle. The goal region $\mathcal{X}_{\mathrm{goal}}$ specifies admissible placement poses within the receptacle, and all frames are expressed with respect to \(\mathcal{F}_{\mathcal{R}}\).
    }
    \label{fig:tspart}
\end{figure}
We use MuJoCo in our simulation study with a six-axis ZArm and an 11-joint RealHand L6 model. The transported box measures \(62.5\times62.5\times162.5\,\mathrm{mm}\), has mass \(0.15\,\mathrm{kg}\), and the receptacle slot is an \(80\times188\times180\,\mathrm{mm}\) region with only \(8.75\,\mathrm{mm}\) nominal lateral clearance per side. All quantities are expressed in the robot base frame $\mathcal{F}_{\mathcal{R}}$. The environment consists of an ArUco-marked receptacle $\mathcal{B}_{\mathrm{slot}}$, a transported object $\mathcal{B}_{\mathrm{obj}}$, and one or more previously stowed objects $\mathcal{B}_{\mathrm{env}}$ contained within the receptacle. The objective is to transport $\mathcal{B}_{\mathrm{obj}}$ into a goal region $\mathcal{X}_{\mathrm{goal}} \subset SE(3)$ within the receptacle. To avoid grasp planning complexity, we assume that a sufficiently stable grasp of $\mathcal{B}_{\mathrm{obj}}$ has already been established\footnote{\noindent We define a stable grasp as a tuple consisting of the robotic hand’s joint configuration and a set of object--hand contact points such that the resulting contact forces can balance any external wrench applied to the object, while also compensating for gravity.}. Let \(T_{\mathcal{R}}^{\mathcal{H}}(t) \in SE(3)\) denote the hand pose at time \(t\), and let \(T_{\mathcal G} \in SE(3)\) denote the fixed grasp transform from the hand frame to the object frame. The object pose at time \(t\) is then
\begin{equation}\label{eq:objpose}
x_t^{\mathrm{obj}}
\;:=\; T_{{\mathcal{R}}}^{\mathcal{B}_{\mathrm{obj}}}(t)
\;=\;T_{{\mathcal{R}}}^{{\mathcal{H}}}(t)\,T_{\mathcal{G}}.
\end{equation}
Let \(x_t^{\mathrm{obj}}=(p_t^{\mathrm{obj}},R_t^{\mathrm{obj}})\), with \(p_t^{\mathrm{obj}}\in\mathbb{R}^3\) and \(R_t^{\mathrm{obj}}\in SO(3)\). The object reaches \(\mathcal{X}_{\mathrm{goal}}\) when \(
\|p_t^{\mathrm{obj}}-p^\star\| \le \varepsilon_p \quad\text{and}\quad d_R(R_t^{\mathrm{obj}},R^\star) \le \varepsilon_R,\) where \((p^\star, R^\star) \in \mathbb{R}^3 \times SO(3)\) is the desired placement pose, known in advance and constructed from the perceived slot center and orientation, \(d_R(R_t^{\mathrm{obj}},R^\star):=\arccos{(\mathrm{Tr}(R_t^{\mathrm{obj})^{\top}}\cdot R^\star) - 1) /2)}\) is a rotation distance metric (geodesic distance on \(SO(3)\)), and \(\varepsilon_p > 0\) and \(\varepsilon_R\,{>}\,0\) are position and orientation tolerances.

\subsection{Instantiation of the General Framework}\label{ssec:inst}
We now cast the stowing task of \Cref{ssec:task} as a concrete instantiation of the risk-sensitive belief-space control framework introduced in \Cref{sec:prbform}, making explicit the correspondence between task-specific quantities and the general formulation.
\begin{enumerate}[labelindent=\parindent, leftmargin=*]
\item[i.] \emph{Latent Parameter}:
In the case study, $\theta$ represents the true slot geometry, which we model as
\(
\theta = \big(T_{\mathrm{slot}},\, w_{\mathrm{slot}}\big),
\)
where \(T_{\mathrm{slot}} \in SE(3)\) is the slot frame pose and \(w_{\mathrm{slot}} \in \mathbb{R}_+\) is the slot half-width, both inferred from noisy camera observations. We model this uncertainty by setting the initial belief over the translational component \(p_{\mathrm{slot}} \in \mathbb{R}^3\) of \(T_{\mathrm{slot}}\) to 
\begin{equation}
p_{\mathrm{slot}} \sim \mathcal{N}(\hat{p}_{\mathrm{slot}},\, \sigma_p^2 I_3), \quad \sigma_p = 12\,\mathrm{mm},
\end{equation}
where \(\hat{p}_{\mathrm{slot}}\) is the camera-based point estimate, and \(I_3\) is the \(3\times3\) identity matrix.

\item[ii.]\emph{Belief Update}:
We represent \(b_t(\theta)\) with a weighted particle filter over \(N_p\) samples. Particles are reweighted by the ArUco detection likelihood and resampled when the effective sample size drops below \(N_p/2\).

\item[iii.] \emph{State and Dynamics}:
The state \(x_t\) consists of the arm--hand configuration, its velocity, and the transported object's pose \(x_t^{\mathrm{obj}}=(p_t^{\mathrm{obj}},R_t^{\mathrm{obj}})\), where \(x_t^{\mathrm{obj}}\) is induced by the fixed grasp transform in \eqref{eq:objpose}. The control input \(u_t \in \mathcal{U} \subseteq \mathbb{R}^m\) is the joint velocity command of the arm--hand system, where \(m\) is the number of controllable joints. MuJoCo integrates the dynamics every 2\,ms, and each discrete control step in \eqref{eq:dyn} comprises 25 integration steps, or 50\,ms. Since the grasp is assumed fixed, the dynamics model is used to predict transport, insertion, and interaction loads rather than to synthesize new hand--object contacts.

\item[iv.] \emph{Cost Function}:
We use the stage cost
\begin{equation}\label{eq:lossfcn}
\ell(x_t, u_t, \theta) = \|p_t^{\mathrm{obj}} - p_{\mathrm{ref}}(\theta)\|_{\bar{Q}}^2 + \|u_t\|_{\bar{R}}^2,
\end{equation}
where \(p_t^{\mathrm{obj}}\) is the \(t\)-step transported object position, \(p_{\mathrm{ref}}(\theta)\) is the approach or insertion waypoint obtained from the slot center associated with \(\theta\), \(\bar{Q} \in \mathbb{R}^{3\times 3}\) is a positive semi-definite position tracking weight, and \(\bar{R} \in \mathbb{R}^{m\times m}\) is a positive definite control regularization weight over the joint velocity commands. Before insertion, the reference is offset outward along the slot axis; during insertion, it moves to the desired placement position. The terminal cost \(\phi\) penalizes final placement error in position and orientation relative to \((p^\star, R^\star)\) to encourage efficient transport and accurate placement.

\item[v.] \emph{Controller Parameters}:
We solve~\eqref{eq:obj}--\eqref{eq:cvarcnst} with an MPPI implementation (Algorithm~\ref{alg:belmpc}) using \(K = 64\) candidate trajectories, horizon \(H = 12\), and \(N_p = 16\) belief particles per rollout. With the 50\,ms control step, the horizon spans 0.6\,s. See \Cref{tab:params} for a full parameter list.

\item[vi.] \emph{Safety Margin:}
We define the safety function as the minimum of three physically meaningful margins,
\begin{equation}\label{eq:clrnc}
h(x_t, \theta)
= \min\!\left\{
\begin{aligned}
&d_{\mathrm{env}}(x_t,\theta) - d_{\min},\\
&\bar f_{\mathrm{env}} - \|f_{\mathrm{env}}(x_t,\theta)\|,\\
&\bar f_{\mathrm{grasp}} - \|f_{\mathrm{grasp}}(x_t,\theta)\|
\end{aligned}
\right\},
\end{equation}
where \(d_{\mathrm{env}}(x_t,\theta)=\mathrm{dist}\!\big(x_t^{\mathrm{obj}},\, \mathcal{B}_{\mathrm{slot}}(\theta)\cup\mathcal{B}_{\mathrm{env}}\big)\) is the minimum signed distance between the transported object and the receptacle or previously stowed objects, \(d_{\min}>0\) is a clearance buffer, \(f_{\mathrm{env}}(x_t,\theta)\) is the resultant object--environment contact force, and \(f_{\mathrm{grasp}}(x_t,\theta)\) is a surrogate for the load transmitted through the grasp. The thresholds \(\bar f_{\mathrm{env}}\) and \(\bar f_{\mathrm{grasp}}\) bound environmental contact and grasp load, respectively. Hence, \(h(x_t,\theta)\ge 0\) enforces clearance, low-force contact, and grasp maintenance along the trajectory.

\begin{remark}[Rationale for Force-Based Safety Margin]
We use the force-based surrogate in \eqref{eq:clrnc} because the dominant failure modes are excessive contact forces and loss of clearance, both directly measurable and consistent with the simulation signals available to the controller. Since \(\sigma_p = 12\,\mathrm{mm}\) (see (i.)) exceeds our prescribed nominal lateral shelf clearance of \(8.75\,\mathrm{mm}\) per side, collision avoidance as well as the overall safe insertion task are inherently probabilistic, motivating the CVaR constraint in~\eqref{eq:cvarcnst}.
\end{remark}

\begin{table}[t]
\centering
\caption{Simulation parameters.}
\label{tab:params}
\setlength{\tabcolsep}{2pt}
\renewcommand{\arraystretch}{0.8}
\begin{tabular}{@{}llc@{}}
\toprule
Symbol & Description & Value \\
\midrule
\multicolumn{3}{l}{\textit{MPPI}} \\
\midrule
$K$         & Candidate trajectories       & $64$ \\
$H$         & Horizon steps                & $12$ \\
$N_p$       & Belief particles per rollout & $16$ \\
$\lambda$   & Temperature                  & $0.4$ \\
$\Sigma$    & Perturbation covariance      & $\sigma^2 I_3$;\; $\sigma\!\in\!\{0.08,0.015\}$ m/s\\
\midrule
\multicolumn{3}{l}{\textit{Risk}} \\
\midrule
$\beta_c$   & Performance CVaR level       & $\in \{0.5, 0.90, 0.95\}$ \\
$\beta_s$   & Safety CVaR level            & $\in \{0.5, 0.90, 0.95\}$ \\
$\lambda_r$ & Risk weight                  & $0.5$ \\
$\mu$       & Safety penalty weight        & $250$ \\
\midrule
\multicolumn{3}{l}{\textit{Task}} \\
\midrule
$d_{\min}$          & Clearance buffer             & $2\,\mathrm{mm}$ \\
$\sigma_p$          & Initial slot position\ std.\     & $12\,\mathrm{mm}$ \\
$\varepsilon_p$     & Stowing distance threshold   & $60\,\mathrm{mm}$ \\
$\varepsilon_R$     & Stowing orientation error tolerance   & $0.35\,$ radians \\
$\bar{f}_{\mathrm{env}}$ & Contact force limit     & $80\,\mathrm{N}$ \\
$\bar{f}_{\mathrm{grasp}}$ & Grasp load limit      & $80\,\mathrm{N}$ \\
\midrule
\multicolumn{3}{l}{\textit{Particle Filter}} \\
\midrule
$N_{\mathrm{thr}}$  & Resampling ESS threshold     & $N_p/2$ \\
\bottomrule
\end{tabular}
\end{table}
\end{enumerate}

\subsection{Chance-Constrained Baseline}\label{ssec:ccbl}
We compare our method against a point-estimate, risk-neutral chance-constrained MPPI baseline (\ccmppi) that uses the same dynamics model, stage cost, terminal cost, and rollout budget, but omits the CVaR term in \eqref{eq:obj} and evaluates \(M_H\) at the posterior mean \(\bar\theta = \mathbb{E}_{b_t}[\theta]\). At each time step, \ccmppi\ solves
\begin{align}
\min_{\mathbf{u} \in \mathcal{U}^H} \quad & J_H(\mathbf{u},\bar\theta,x_t) \label{eq:ccobj} \\
\text{subject to} \quad & M_H(\mathbf{u},\bar\theta) \ge -\delta_H, \label{eq:cccnst}
\end{align}
where \(\delta_H>0\) is a soft margin tolerance. Because \ccmppi\ uses only \(\bar\theta\), it does not represent the full belief and does not isolate the CVaR objective term from the CVaR safety constraint.

\subsection{CVaR Computation}\label{ssec:cvar}
The per-candidate scores in~\eqref{eq:score} require empirical estimates of the expectation and CVaR quantities in~\eqref{eq:empexp}, computed from the $N_p$ belief-particle rollouts of \Cref{ssec:alg}. For the performance CVaR, $\mathcal{I}_{\beta_c}^{(j)}$ indexes the $\lceil(1-\beta_c)N_p\rceil$ highest costs $\{J_H^{(i,j)}\}_{i=1}^{N_p}$; for the safety CVaR, $\mathcal{I}_{\beta_s}^{(j)}$ indexes the $\lceil(1-\beta_s)N_p\rceil$ largest values of $\{{-}{M_H^{(i,j)}}\}_{i=1}^{N_p}$, i.e., the most severely violated margins. The rollouts require $\mathcal{O}(K N_p H)$ dynamics evaluations, and both empirical CVaRs are computed in $\mathcal{O}(K N_p \log N_p)$ by sorting and averaging, which adds negligible overhead relative to the MuJoCo forward rollouts.
\section{Results \& Discussion}\label{sec:results}
We evaluate our control algorithm (Algorithm \ref{alg:belmpc}) in MuJoCo across 11 randomized trials per configuration, varying the initial belief spread and true slot pose offset. We compare three configurations: (i)~\emph{low risk aversion} (\(\beta_s = 0.50\)), (ii)~\emph{moderate risk aversion} (\(\beta_s = 0.90\)), and (iii)~\emph{high risk aversion} (\(\beta_s = 0.95\)). We set all other parameters to the defaults enumerated in \Cref{tab:params}.

\subsection{Primary Results}\label{ssec:primary}
\Cref{fig:stowing_sequences} contrasts successful placement at \(\beta_s=0.95\) with collision and retraction at \(\beta_s=0.50\).

\begin{figure}[t]
    \centering
    \includegraphics[width=\linewidth]{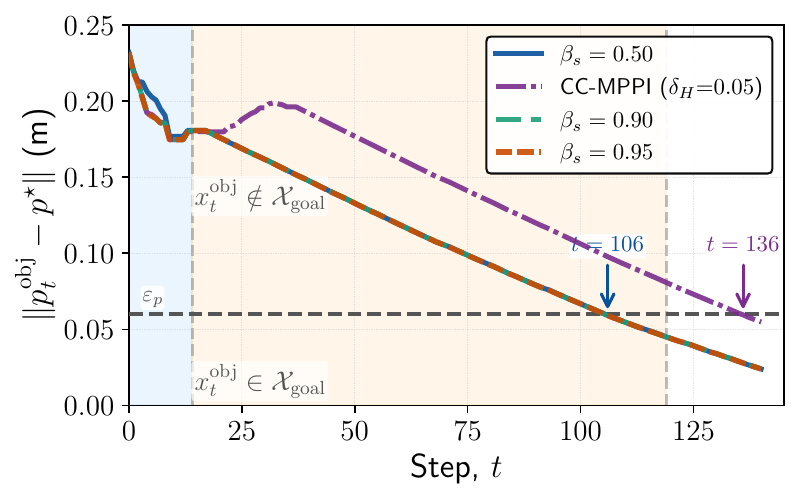}
    \caption{\textbf{Representative Trajectory.} Colored bands indicate task phases (\legendbox{phaseblue}~approach and \legendbox{phaseorange}~insert), and the dotted vertical line marks the transition between them. \(\varepsilon_{p}\) is the stowing proximity threshold. Both CVaR-MPPI (\(\beta_s \ge 0.90\)) configurations satisfy the terminal placement criteria, whereas \(\beta_s{=}0.50\) and \ccmppi\ (\(\delta_H{=}0.05\)) do not complete insertion.}
    \label{fig:traj}
\end{figure}

\begin{table}[t]
\centering
\caption{Insertion performance. \(\uparrow\) higher is better, \(\downarrow\) lower is better.}
\label{tab:results}
\setlength{\tabcolsep}{4pt}
\renewcommand{\arraystretch}{0.95}
\begin{tabular}{@{}lcccc@{}}
\toprule
& \(\beta_s{=}0.50\) & \(\beta_s{=}0.90\) & \(\beta_s{=}0.95\) & \ccmppi \\
\midrule
\multicolumn{5}{l}{\textit{Task Completion ($11$ trials)}} \\
\midrule
Success (\%) $\uparrow$           & $55$   & $73$   & $\mathbf{82}$ & $50$ \\
Contact rate (\%) $\downarrow$    & $27$   & $\mathbf{0}$   & $\mathbf{0}$   & $\mathbf{0}$ \\
\midrule
\multicolumn{5}{l}{\textit{Force \& Safety}} \\
\midrule
Mean ext.\ force (N) $\downarrow$ & $1.17$ & $\mathbf{0.00}$  & $\mathbf{0.00}$ & $0.35$ \\
Max ext.\ force (N) $\downarrow$  & $1118$ & $\mathbf{0}$   & $\mathbf{0}$  & $1$ \\
Min margin (mm) $\uparrow$        & $-28.4$ & $-23.4$ & $\mathbf{-23.4}$ & $-28.2$ \\
$\CVaR_{\beta_s}({-}{M_H})$ $\downarrow$ & $\mathbf{-0.151}$ & $-0.139$ & ${-0.135}$ & -- \\
\midrule
\multicolumn{5}{l}{\textit{Efficiency}} \\
\midrule
Final dist.\ (mm) $\downarrow$    & ${123.4}$ & $\mathbf{100.9}$ & $131.8$ & $121.2$ \\
Wall time (ms/step)               & $1169$  & $1172$  & $1193$  & $1207$ \\
\bottomrule
\end{tabular}
\end{table}
\Cref{tab:results} summarizes the quantitative results from $11$ randomized object insertion trials for varying performance and safety risk thresholds (\(\beta_c\,{=}\,\beta_s \in \{0.5, 0.9, 0.95\}\)) and fixed cost-risk weight \(\lambda_r{=}0.5\), contrasting them with the \ccmppi~baseline (\(\delta_H{=}0.05\)). \Cref{fig:traj} shows a representative object trajectory under each configuration. To reduce the effect of contact bounciness on the terminal object position, we apply a Savitzky-Golay filter~\cite{savitzky1964smoothing} with a window of 7 steps. 
\begin{figure}[b]
    \centering
    \includegraphics[width=\linewidth]{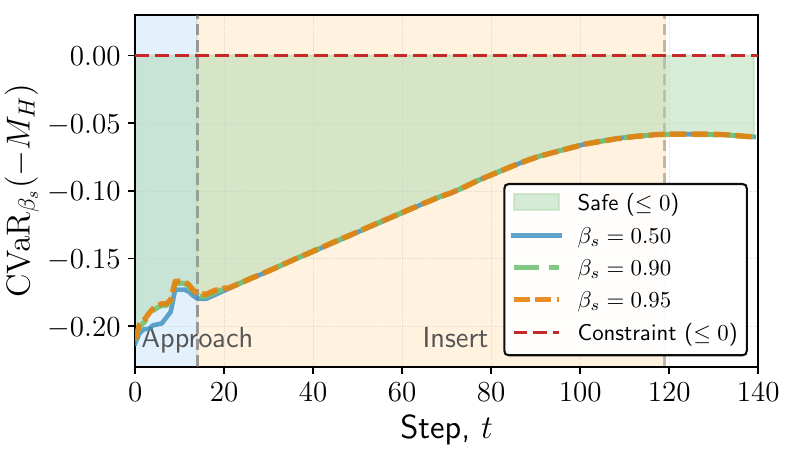}
    \caption{\textbf{CVaR Safety Constraint under Varying \(\boldsymbol{\beta_s}\).} Each panel shows the empirical \(\CVaR_{\beta_s}({-}{M_H})\) used by the MPPI score; green indicates a nonpositive estimate, and the red dashed line marks the constraint boundary. Higher \(\beta_s\) evaluates a narrower worst-case tail (worst 5\% vs.\ 50\%), moving the trace closer to the boundary. Only \(\beta_s \ge 0.90\) eliminates exterior contact~(\Cref{tab:results}).}
    \label{fig:cvar}
\end{figure}
\Cref{fig:cvar} shows the CVaR safety statistic for each configuration. Because \(\CVaR_{\beta_s}({-}{M_H})\) evaluates the worst \((1{-}\beta_s)\) tail, a higher \(\beta_s\) focuses on a narrower extreme, making the empirical constraint harder to satisfy and moving the trace closer to the boundary. The \(\beta_s{=}0.50\) trajectory has a nonpositive particle estimate but incurs a large simulated contact force, whereas \(\beta_s{=}0.95\) maintains wider physical clearances and eliminates exterior contact.

\subsection{Safety Confidence Level Ablation}\label{sssec:abl_beta}
To test the effect of the safety confidence level (\(\beta_s\)), we sweep \(\beta_s\) across three values (0.50, 0.9, and 0.95) recording success rate, insertion contact severity, and clearance. We report our results in \Cref{tab:abl_beta}.
\begin{table}[t]
\centering
\caption{Effect of safety confidence level \(\beta_s\) (initial position std.\ $\sigma_p{=}12$\,mm, exceeding the $8.75$\,mm lateral clearance). \(\uparrow\) higher is better, \(\downarrow\) lower is better.}
\label{tab:abl_beta}
\begin{tabular}{@{}lccc@{}}
\toprule
Metric & \(\beta_s{=}0.50\) & \(\beta_s{=}0.90\) & \(\beta_s{=}0.95\) \\
\midrule
Success rate (\%) $\uparrow$       & $55$   & $73$   & $\mathbf{82}$ \\
Contact rate (\%) $\downarrow$       & $27$   & $\mathbf{0}$   & $\mathbf{0}$   \\
Max ext.\ force (N) $\downarrow$     & $1118$ & $\mathbf{0}$   & $\mathbf{0}$ \\
\(\CVaR_{\beta_s}({-}{M_H})\) $\downarrow$ & $\mathbf{-0.151}$ & $-0.139$ & ${-0.135}$ \\
\bottomrule
\end{tabular}
\end{table}
At \(\beta_s{=}0.50\), the low-confidence controller has a \(27\%\) exterior-contact rate. Raising \(\beta_s\) to \(0.90\) or \(0.95\) eliminates exterior contact across the 11 trials, while simultaneously improving the success rate from \(55\%\) to \(82\%\). The reported \(1118\,\mathrm{N}\) value is the peak object--receptacle contact force computed by MuJoCo. However, because we did not calibrate the contact parameters against physical force measurements, this value indicates severe simulated contact but does not predict the corresponding real-world contact force.

\begin{figure*}[t]
    \centering
    \setlength{\fboxsep}{0pt}
    \setlength{\fboxrule}{0.5pt}
    \setlength{\tabcolsep}{2pt}
    \begin{tabular}{@{}cccc@{}}
    \multicolumn{4}{@{}l@{}}{\small (a) High Safety Confidence (\(\beta_s=0.95\))} \\[2pt]
    \fbox{\includegraphics[width=\dimexpr0.242\textwidth-2\fboxrule\relax]{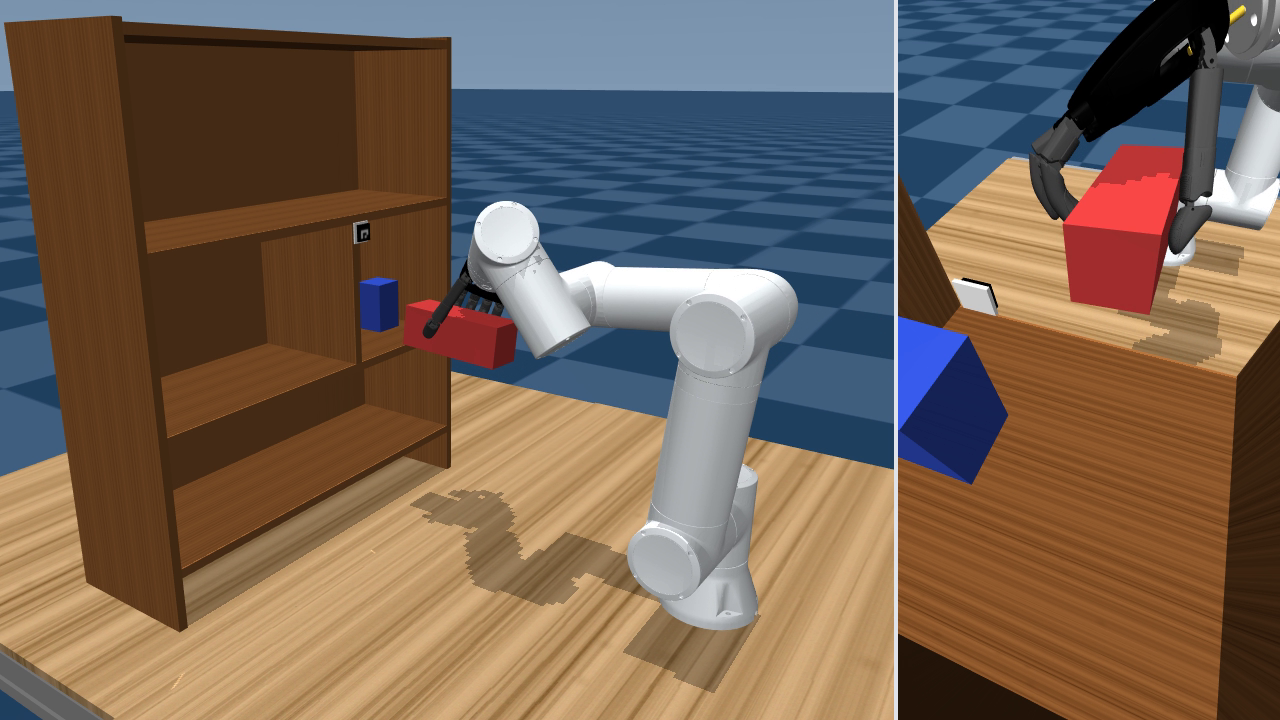}} & \fbox{\includegraphics[width=\dimexpr0.242\textwidth-2\fboxrule\relax]{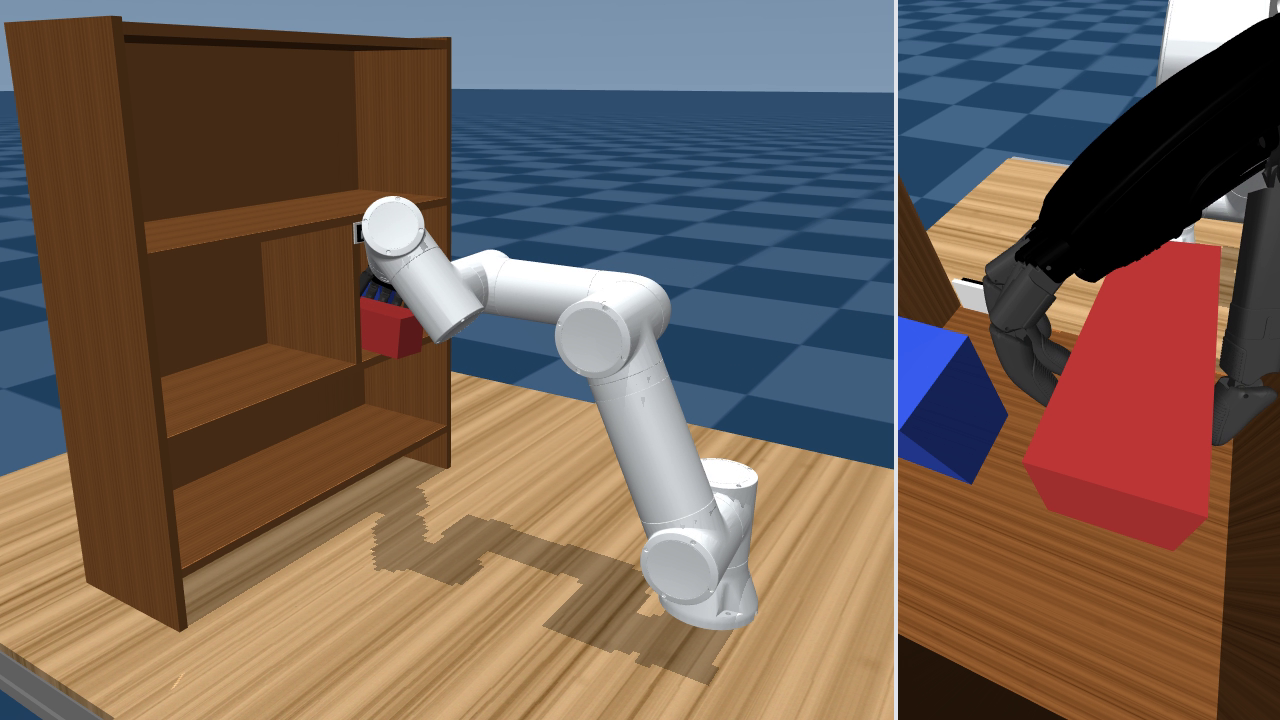}} & \fbox{\includegraphics[width=\dimexpr0.242\textwidth-2\fboxrule\relax]{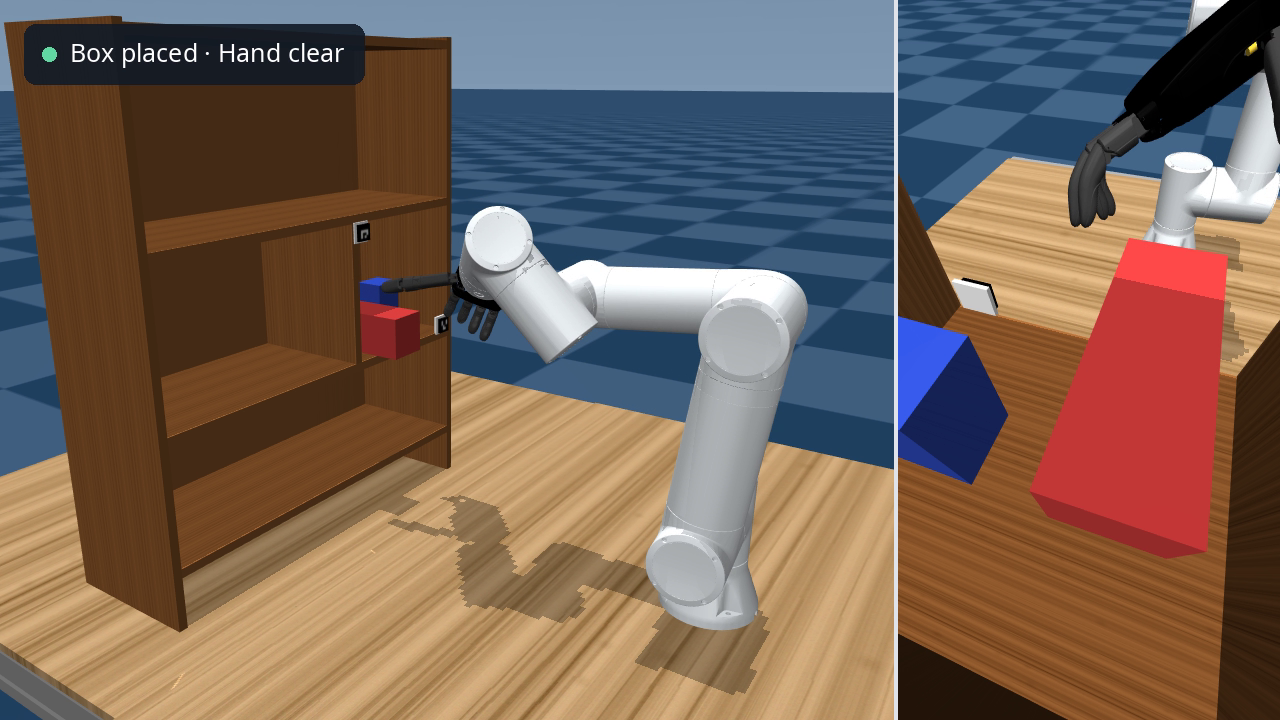}} & \fbox{\includegraphics[width=\dimexpr0.242\textwidth-2\fboxrule\relax]{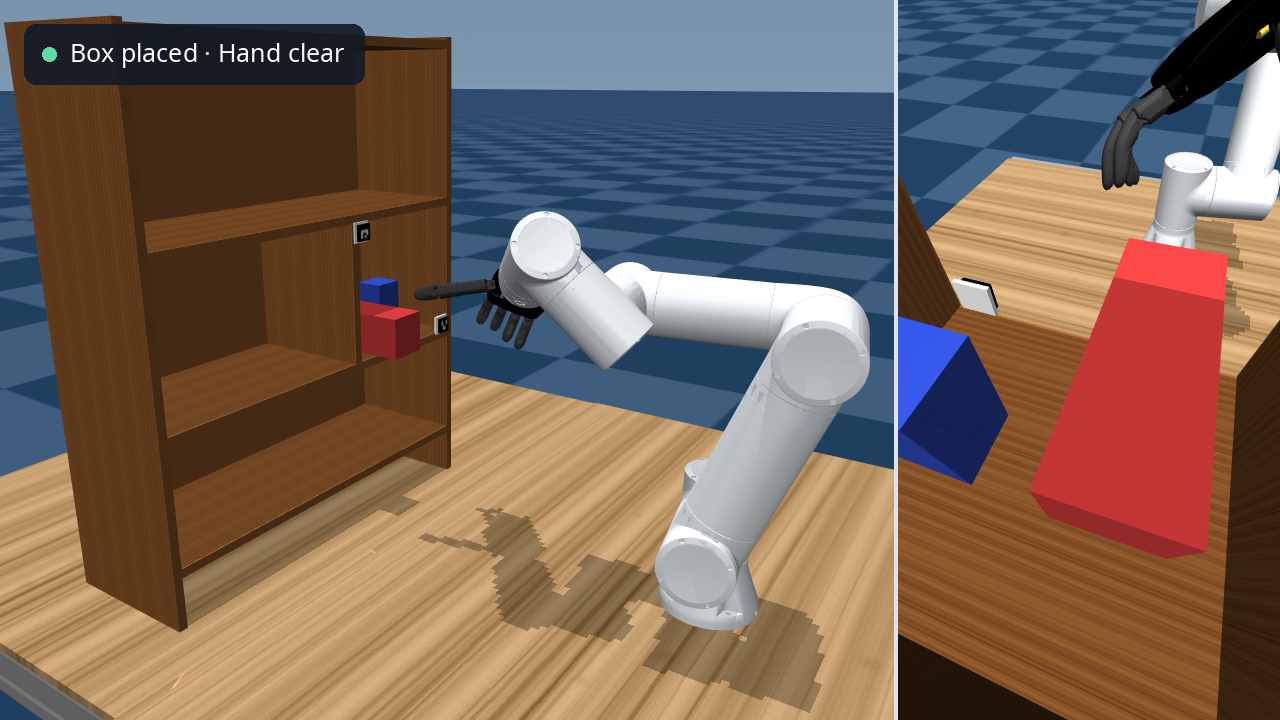}} \\
    {\scriptsize Approach (4.0 s)} & {\scriptsize Insert (9.6 s)} & {\scriptsize Place (13.2 s)} & {\scriptsize Retreat (15.0 s)} \\[5pt]
    \multicolumn{4}{@{}l@{}}{\small (b) Low Safety Confidence (\(\beta_s=0.50\))} \\[2pt]
    \fbox{\includegraphics[width=\dimexpr0.242\textwidth-2\fboxrule\relax]{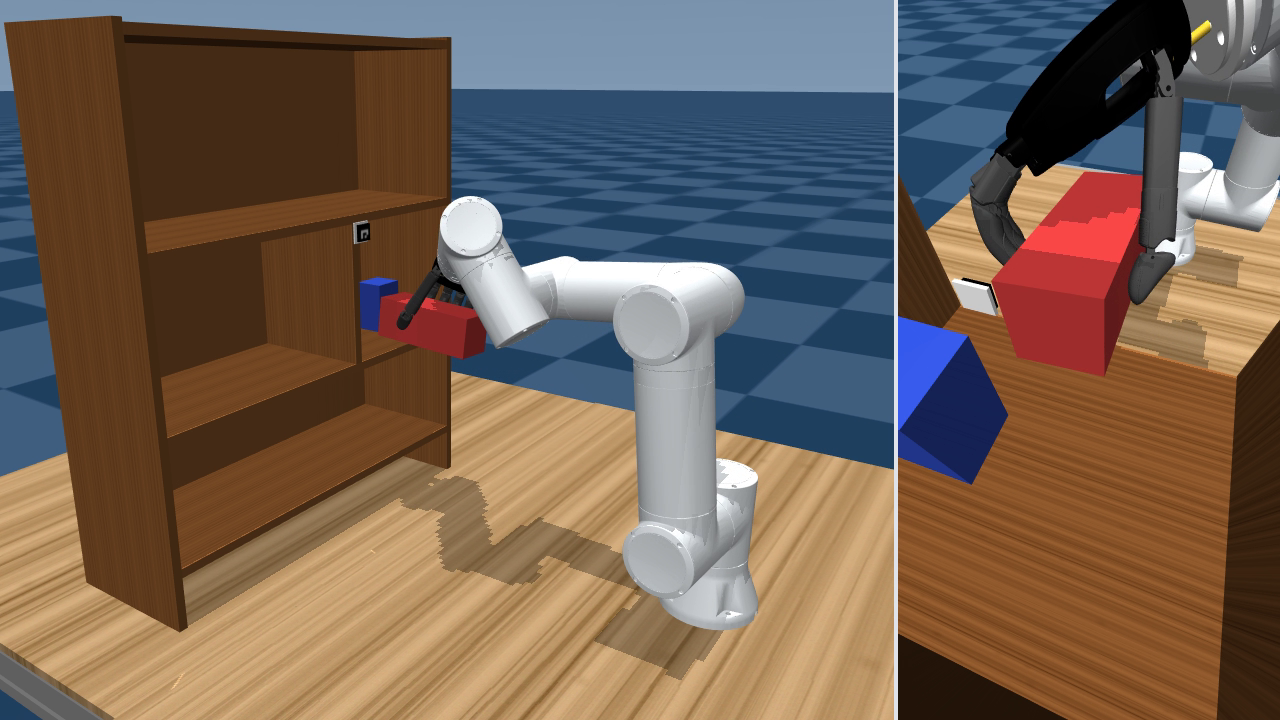}} & \fbox{\includegraphics[width=\dimexpr0.242\textwidth-2\fboxrule\relax]{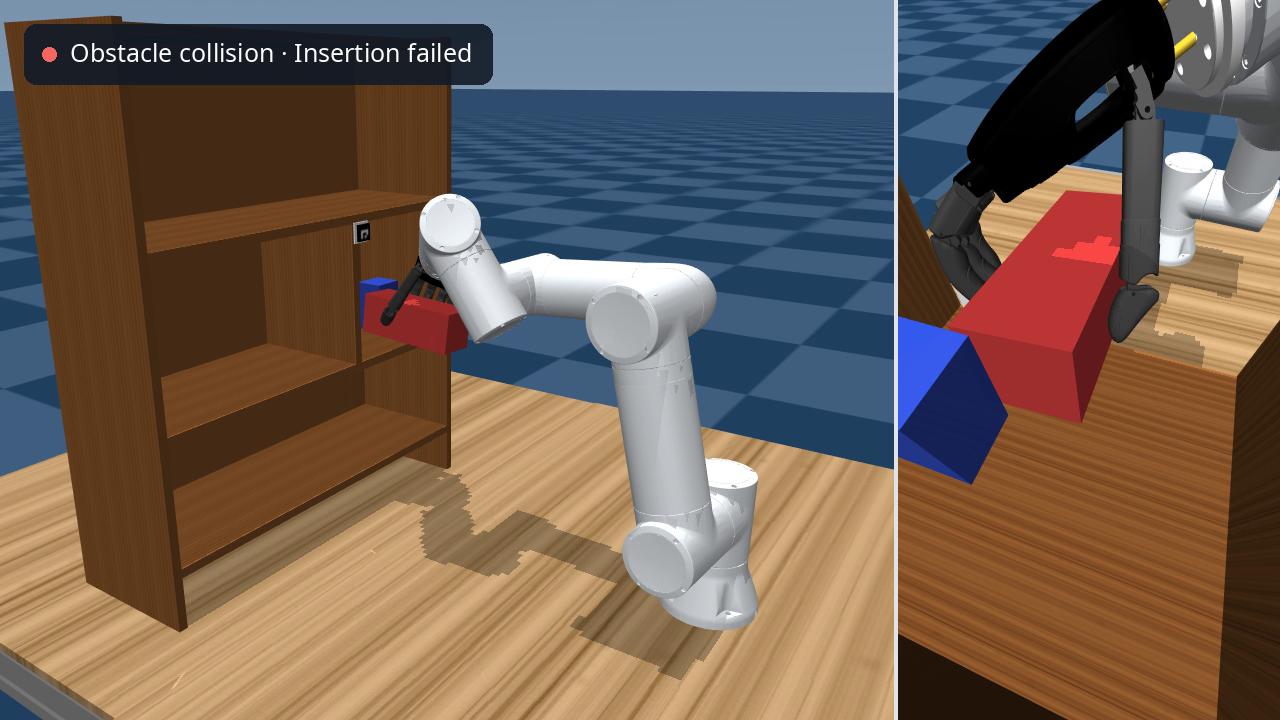}} & \fbox{\includegraphics[width=\dimexpr0.242\textwidth-2\fboxrule\relax]{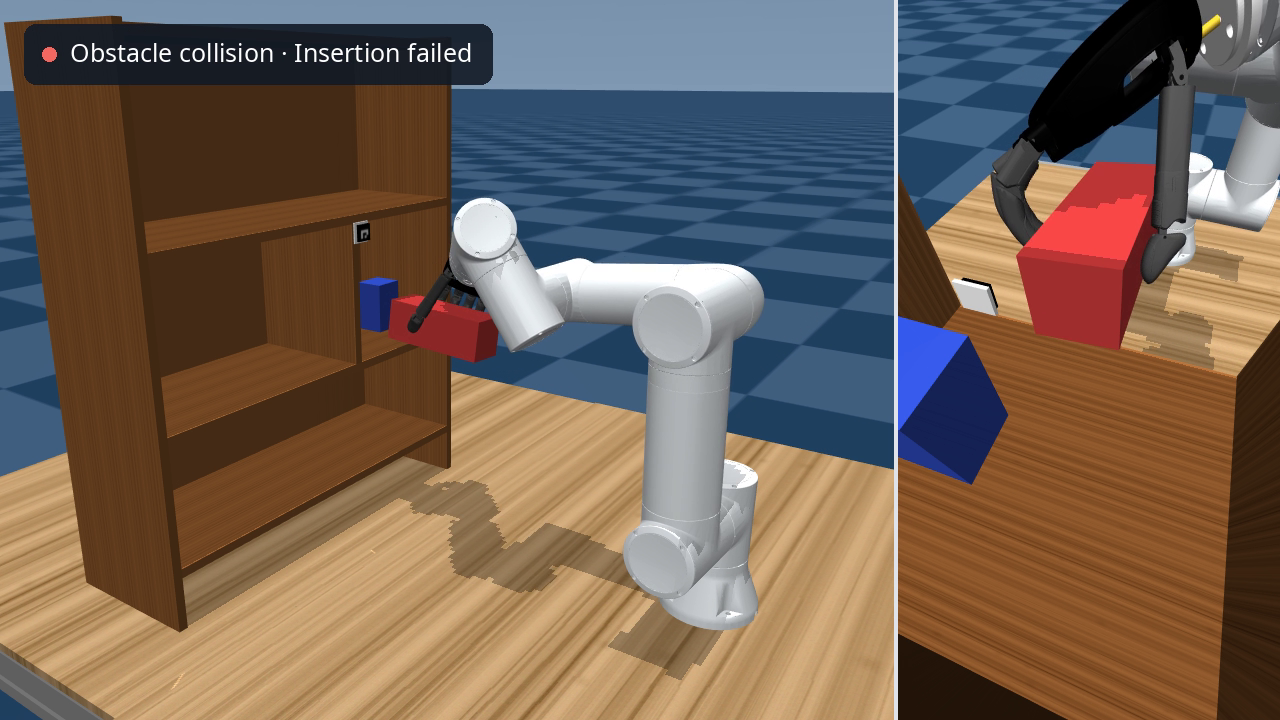}} & \fbox{\includegraphics[width=\dimexpr0.242\textwidth-2\fboxrule\relax]{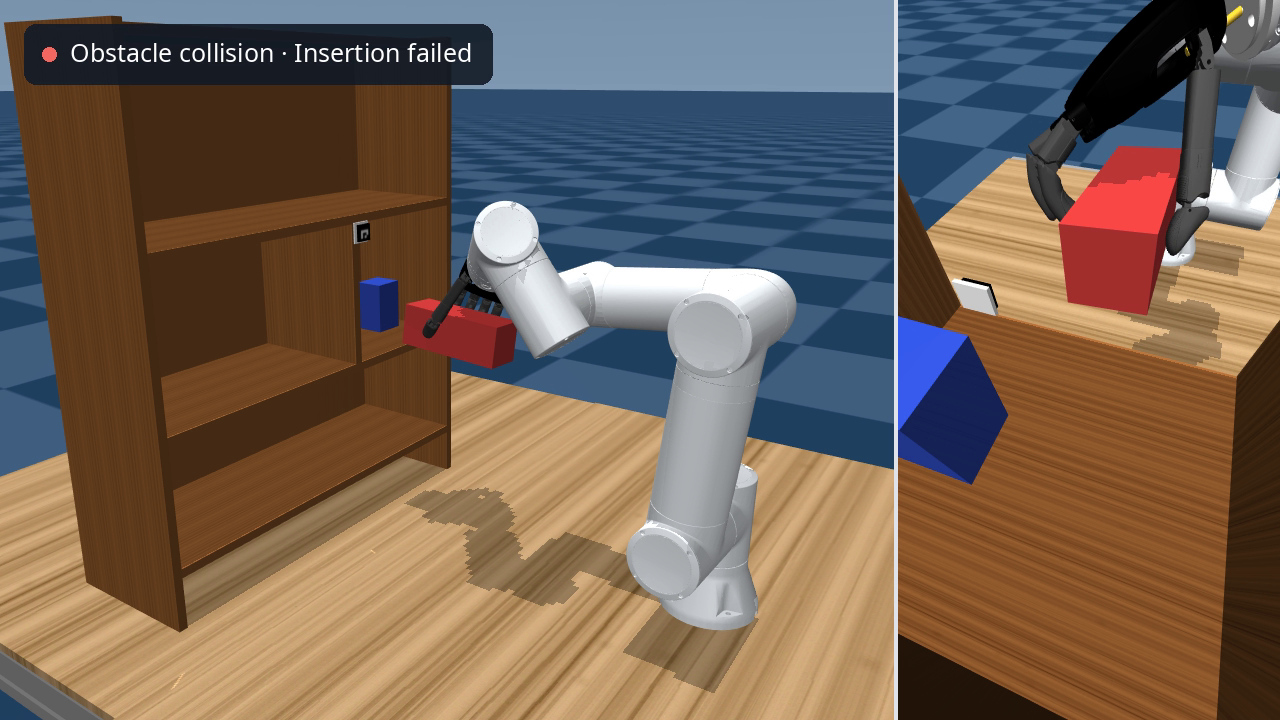}} \\
    {\scriptsize Insert (5.6 s)} & {\scriptsize Collision (6.2 s)} & {\scriptsize Retract (7.8 s)} & {\scriptsize Failure (9.4 s)} \\[5pt]
    \end{tabular}
    \caption{\textbf{Closed-Loop Stowing under Different Safety Confidence Levels.} The high-confidence execution (top) progresses through approach, insertion, placement, and retreat, while the low-confidence execution (bottom) collides during insertion and retracts without successful placement. Each row progresses from left to right. Times indicate elapsed task execution time.}
    \label{fig:stowing_sequences}
\end{figure*}
\subsection{Discussion}\label{ssec:discussion}
\paragraph{Safety-Performance Tradeoff}
\Cref{tab:results,tab:abl_beta} show that increasing \(\beta_s\) eliminates exterior contact (27\%\,to\,0\% contact rate) and improves success from 55\% to 82\%. Final placement error is not monotone in \(\beta_s\): the moderate-risk configuration has the smallest final distance, while the high-risk configuration has the highest success rate.

\paragraph{CVaR-MPPI vs.\ CCMPPI}
Both \ccmppi\ (\(\delta_H{=}0.050\)) and CVaR-MPPI (\(\beta_s{=}0.90\)) achieve zero exterior contact, yet CVaR-MPPI attains a higher success rate (\(73\%\) vs.\ \(50\%\)); at \(\beta_s{=}0.95\), the success rate is \(82\%\). Here, zero exterior contact does not itself imply successful placement: success separately requires the terminal position and orientation criteria. Because \ccmppi\ uses a point estimate and omits both CVaR terms, this comparison evaluates the complete controllers rather than isolating either CVaR component.

\paragraph{Computational Cost}
All configurations run at approximately \(1{,}200\)\,ms per control step on a single CPU core (\Cref{tab:results}), dominated by the \(K \times N_p\) forward rollouts in MuJoCo (\(1{,}024\) total). This is about 24 times the 50\,ms control period and is therefore not suitable for real-time industrial deployment without parallel rollout evaluation or model acceleration; the CVaR estimation overhead is negligible relative to simulation.

\paragraph{Limitations}
The evaluation is limited to one simulated task with 11 trials per configuration; it does not include hardware experiments or independent perturbations of the planning model relative to the execution model. It also assumes a rigid grasp, a simplified clearance model with a finite set of surface points, and independent observation noise. These assumptions bound the present empirical conclusions, while hardware, model-mismatch, and additional-task studies remain necessary to assess broader applicability.

\FloatBarrier

\section{Conclusion}\label{sec:conc}
We presented a belief-space MPC framework for safe control under latent uncertainty. By solving a risk-regularized finite-horizon optimal control problem with a CVaR safety constraint on the trajectory margin, the controller produces actions that optimize expected performance while bounding the probability of safety violations. The CVaR constraint implies a probabilistic safety guarantee (\Cref{thm:safety}), the formulation recovers risk-neutral control in the appropriate limit (\Cref{thm:limit}), and a union-bound argument extends the per-horizon guarantee to cumulative safety across receding-horizon re-solves (\Cref{thm:cumulative}). The vision-guided dexterous insertion study shows improved success and reduced contact at higher safety confidence levels, and the complete CVaR-MPPI controller outperforms the point-estimate CCMPPI baseline. Future work will address correlated observation noise and extension to time-varying latent parameters.

\renewcommand{\baselinestretch}{0.9}\selectfont
\makeatletter
\let\arxivoldbibliography\thebibliography
\renewcommand{\thebibliography}[1]{\arxivoldbibliography{#1}\fontsize{7}{8.5}\selectfont\setlength{\itemsep}{0pt}}
\makeatother
\bibliographystyle{ieeetr}
\bibliography{references}
\end{document}